\newtheorem{definition}{Definition}[section]
\newtheorem{lemma}{Lemma}[section]
\newtheorem{theorem}{Theorem}[section]
\newtheorem{example}{Example}[section]
\newtheorem{construction}{Construction}[section]
\newcommand{\A}{\mathcal {A}}
\newcommand{\B}{\mathcal {B}}
\newcommand{\C}{\mathcal {C}}
\newcommand{\G}{\mathcal {G}}
\newcommand{\bbZ}{{\mathbb Z}}
\newcommand{\bbF}{{\mathbb F}}
\newcommand{\vu}{{\sf u}}
\newcommand{\vx}{{\sf x}}
\newcommand{\vy}{{\sf y}}
\begin{document}

\title{Optimal $q$-Ary Error
  Correcting/All Unidirectional Error Detecting Codes}

\author{Yeow~Meng~Chee,~{\it Senior~Member,~IEEE},~and~Xiande~Zhang
        \thanks{Y. M. Chee ({\tt ymchee@ntu.edu.sg})
        is with Division of Mathematical Sciences, School of
Physical and Mathematical Sciences, Nanyang Technological
University, Singapore 637371.}

\thanks{X. Zhang ({\tt drzhangx@ustc.edu.cn}) is with School of Mathematical Sciences,
University of Science and Technology of China, Hefei, 230026, Anhui, China.  The research of X. Zhang is supported by NSFC under grant 11771419 and 11301503, and by ``the  Fundamental
Research Funds for the Central Universities''.}
\thanks{Copyright (c) 2017 IEEE. Personal use of this material is permitted.  However, permission to use this material for any other purposes must be obtained from the IEEE by sending a request to pubs-permissions@ieee.org.
 }
}
\maketitle

\begin{abstract}
\boldmath Codes that can correct up to $t$ symmetric errors and
detect all unidirectional errors, known as $t$-EC-AUED codes, are
studied in this paper. Given positive integers $q$, $a$ and $t$, let
$n_q(a,t+1)$ denote the length of the shortest $q$-ary $t$-EC-AUED
code of size $a$. We introduce combinatorial
constructions for $q$-ary $t$-EC-AUED codes via one-factorizations of complete graphs,
and concatenation of MDS codes and codes from resolvable set systems. Consequently, we
determine the exact values of $n_q(a,t+1)$ for several new infinite
families of $q,a$ and $t$.
\end{abstract}

\begin{IEEEkeywords}
\boldmath unidirectional errors, EC-AUED codes,
one-factorizations, concatenation
\end{IEEEkeywords}

\section{Introduction}

\IEEEPARstart{C}{lassical}  error control codes have been designed
for use on binary {\em symmetric channels}, i.e., both $1\rightarrow
0$ and $0\rightarrow 1$ errors can occur during transmission.
However, errors in some VLSI and optical systems are asymmetric in
nature \cite{ConstantinRao:1979,Pierce:1978}, where the error
probability from $1$ to $0$ is significantly higher than that from
$0$ to $1$. Practically we can assume that only one type of errors
can occur in those systems. These errors are called {\em asymmetric
errors}.

Different from asymmetric errors, {\em
unidirectional errors} can be caused by certain faults in digital devices, where both $1\rightarrow
0$ and $0\rightarrow 1$ type of errors are possible, but in any particular word all the errors are of the same type. Digital units that
produce unidirectional errors as a consequence of internal failure
are  data transmission systems, magnetic recording mass memories,
and LSI/VLSI circuits such as ROM memories \cite{BlaumVan:1989}. The
number of random errors caused by these failures is usually limited,
while the number of unidirectional errors can be large. For this
reason, it is useful to consider codes that are capable of
correcting a relatively small number of  random errors  and
detecting  any number of unidirectional errors.   Considerable
attention  has been  paid  to this problem, see for example
\cite{BlaumVan:1989,BoiFraTil1990,NaydeTor2009,CheeLing:2009,zhang1993lym,zhang1993construction,bruck1992new,lin1993constant,al2000another,nikolos1986systematic,pradhan1980new,lin1988theory,tao1988efficient,kundu1990symmetric}.

In 1973, Varshamov introduced a $q$-ary asymmetric channel
\cite{Varshamov:1973}, where the inputs and outputs of the channel
are sequences over the $q$-ary alphabet $R = \{ 0, 1, \ldots , q-1
\}$. If the symbol $i$ is transmitted then the only symbols which
the receiver can get are $ i, i + 1, \ldots , q - 1 $. We say that
the type of this error is {\em increasing}. Naturally, we have another
type of error which is {\em decreasing}.  The {\em $q$-ary unidirectional
channel} is the channel on which all errors within a codeword are of
the same type (all increasing or all decreasing). Recent work on
$q$-ary unidirectional errors can be found in
\cite{NaydeTor2009,ElariefBose:2010,Fu:2004,ahlswede2006q,ahlswede2002unidirectional} for example.

In this paper, we study constructions for $q$-ary
codes which can correct up to $t$ symmetric errors and detect all
unidirectional errors (known as $t$-EC-AUED codes). We are only interested in codes that are
optimal when considering the shortest lengths for
given sizes. Let $n_q(a,t+1)$ denote the length of the shortest
$q$-ary $t$-EC-AUED code of size $a$.
We introduce several combinatorial
constructions for $t$-EC-AUED codes and
determine the exact values of $n_q(a,t+1)$ for new infinite
families of $q,a$ and $t$.

Our main results are as follows:

 \begin{enumerate}[(i)]
 \item determining values of $n_3(a,t+1)$ for $a\leq 12$ and  all $t$;
\item for integers $k\geq 2$, $n_k(a,k-1)=2k-1$ with $k+1\leq a\leq 2k-1$; if $k$ is odd, then $n_k(2k,k-1)=2k-1$;
\item for prime powers $q\geq 2$, $n_q(a,q)=2q+2$ with $2q-1\leq a\leq q^2$;
\item given positive integers $s$, $\lambda\geq1$ and $\alpha\geq 2$, \[n_{q}(sn,
T)=2\lambda s(n-1)/(\alpha-1)\] for all sufficiently large $n$
satisfying $sn\equiv 0 \pmod{\alpha}$ and $\lambda s(n-1)\equiv 0
\pmod{\alpha-1}$, where $T=\frac{\lambda s(n-1)}{\alpha -1}-\lambda$
and $q=\frac{sn}{\alpha}$;
\item given positive integers $\lambda\geq1$ and $\alpha\geq 2$, \[n_{q}(\alpha q, T)=2\lambda(\alpha q-1)/(\alpha-1)-2\]
for all sufficiently large $q$ satisfying $\lambda(\alpha q-1)\equiv
0 \pmod{\alpha-1}$, where $T=\frac{\lambda\alpha (q-1)}{\alpha -1}-1$.
\item values for several other families of $n_q(a,t+1)$ are stated in Table I.
\end{enumerate}
Previously, only values of $n_2(a,t+1)$ for $a\leq 14$ and $n_3(a,t+1)$ for $a\leq 9$
were known by  \cite{NaydeTor2009}.

Our paper is organized as follows. In Section \ref{pre}, we introduce necessary notation and briefly describe the problem status. Section~\ref{fac} gives a construction of optimal EC-AUED codes from near one-factorizations, where result (ii) is obtained. In Section~\ref{conca}, we apply concatenation method to various codes with good Hamming distance to get results (iii)-(vi). In Section \ref{ter}, we improve results in \cite{NaydeTor2009} and determine completely $n_3(a,t+1)$ for $a\leq 12$ and all $t$, which is our result (i). Finally, a conclusion is given in Section \ref{con}.

\section{Preliminaries}\label{pre}

Necessary and sufficient conditions for correcting and detecting
errors of each of the three types, symmetric, asymmetric and
unidirectional, are known in \cite{Weber:1985,Weber92nece}. To state these
conditions, we need some necessary notation.

 Let $X$ be a finite
set, and $R^X$ denote the set of vectors of length $|X|$, where each
component of a vector $\vu\in R^X$ has value in $R$ and is indexed
by an element of $X$, that is, $\vu=(\vu_x)_{x\in X}$, and $\vu_x\in
R$ for each $x\in X$.
For $\vx,\vy\in R^X$, let $N(\vx,\vy)$ denote
the number of positions $i$ where $\vx_i>\vy_i$. If $N(\vy,\vx)=0$, then the
vector $\vx$ is said to {\em cover} the vector $\vy$ and we write $\vx\geq
\vy$. If $\vx\geq \vy$ or $\vy\geq \vx$ the vectors $\vx$ and $\vy$ are said to be
{\em ordered}, otherwise they are {\em unordered}.

A {\em code} is a set $\C\subseteq R^X$ for some $X$. The elements of $\C$ are called {\em
codewords}. A code
is called a {\em $t$-EC-AUED code} if it is able to correct up to
$t$ symmetric errors and detect all unidirectional errors. Clearly a
code is $0$-EC-AUED if any pair of codewords are unordered. For
general $t$, a characterization of when a code is a $t$-EC-AUED code
is known as follows.

\begin{theorem}\cite{BoseRao:1982}\label{aued} A code $\C$ is a $t$-EC-AUED code if and only if $N(\vx,\vy)\geq t+1$ and $N(\vy,\vx)\geq t+1$, for all
distinct $\vx,\vy\in \C$.
\end{theorem}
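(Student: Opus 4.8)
The plan is to prove the two implications separately, in both cases exploiting the decomposition of Hamming distance into the two ``directed'' counts, $d(\vx,\vy)=N(\vx,\vy)+N(\vy,\vx)$, together with the observation that a purely increasing unidirectional error sends a word $\vy$ to some $\vr$ with $\vr\geq\vy$, while a purely decreasing one produces $\vr\leq\vy$. Throughout I fix the natural decoder: on receiving $\vr$, output the unique codeword within Hamming distance $t$ of $\vr$ if one exists, and otherwise flag an error.

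For sufficiency, assume $N(\vx,\vy)\geq t+1$ and $N(\vy,\vx)\geq t+1$ for all distinct $\vx,\vy\in\C$. Since $d(\vx,\vy)=N(\vx,\vy)+N(\vy,\vx)\geq 2t+2$, the Hamming balls of radius $t$ about codewords are disjoint, so up to $t$ symmetric errors are corrected. For detection, suppose $\vy$ is transmitted and an increasing unidirectional error yields $\vr\geq\vy$. At each of the $N(\vy,\vx)\geq t+1$ positions where $\vy_i>\vx_i$ we have $\vr_i\geq\vy_i>\vx_i$, so $\vr$ disagrees with every other codeword $\vx$ in at least $t+1$ positions; hence $d(\vr,\vx)\geq t+1>t$ and $\vr$ can never be mis-decoded to $\vx$. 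The decreasing case is symmetric, using $N(\vx,\vy)\geq t+1$. Thus every unidirectional error is either corrected (if at most $t$ symbols changed) or else detected.

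For necessity I argue the contrapositive: if some distinct pair violates the bound, say $N(\vy,\vx)\leq t$, I build a received word that the decoder cannot handle. If $N(\vx,\vy)\geq 1$, define $\vr$ to equal $\vx$ except at the (at most $t$) positions where $\vy_i>\vx_i$, where I set $\vr_i=\vy_i$. Then $d(\vr,\vx)=N(\vy,\vx)\leq t$, so $\vr$ is a legitimate outcome of at most $t$ symmetric errors on $\vx$; simultaneously $\vr\geq\vy$ with $\vr\neq\vy$ (strict at the positions where $\vx_i>\vy_i$), so $\vr$ is also the result of an increasing unidirectional error on $\vy$. The decoder is forced to correct to $\vx$, leaving the unidirectional error on $\vy$ undetected. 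The degenerate case $N(\vx,\vy)=0$ means $\vy\geq\vx$ with $\vy\neq\vx$, so transmitting $\vx$ and applying an increasing unidirectional error produces the codeword $\vy$ itself, an undetectable error. Either way $\C$ fails to be $t$-EC-AUED.

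The main obstacle is not any single calculation but fixing the operational meaning of ``$t$-EC-AUED'' precisely enough that the combinatorial inequalities are exactly equivalent to it: one must argue that a single decoder can meet both the correction and the detection requirement, and that the two requirements interact only through received words reachable both by a small symmetric error pattern and by a unidirectional one. Getting the boundary cases right---ordered pairs and the possibility $N(\vx,\vy)=0$---is where care is needed, since these are exactly the situations the inequalities $N(\vx,\vy),N(\vy,\vx)\geq t+1\geq 1$ are designed to exclude.
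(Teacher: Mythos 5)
Your proof is correct: both directions are sound, the decoder-based formulation of ``$t$-EC-AUED'' is the right operational notion, and the contrapositive construction (patching $\vx$ at the $N(\vy,\vx)\leq t$ positions to obtain a word reachable both by $\leq t$ symmetric errors from $\vx$ and by an increasing unidirectional error from $\vy$) is exactly the standard argument, including the careful treatment of the degenerate ordered case $N(\vx,\vy)=0$. Note that the paper itself offers no proof of this statement---it is quoted from Bose and Rao \cite{BoseRao:1982}---so there is nothing to compare against; your argument is essentially the one in that reference.
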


Define the {\em asymmetric distance} of two vectors $\vx$ and $\vy$ as
$d_{as}(\vx, \vy) = \min\{N(\vx, \vy), N(y, \vx)\}$. Then codes with minimum asymmetric distance $T$
are $(T - 1)$-EC-AUED codes.
Let $n_q(a,T)$ denote the length of the shortest $q$-ary $(T-1)$-EC-AUED code
 of size $a$. We say that a $q$-ary $(T-1)$-EC-AUED code of length
$n_q(a,T)$ and size $a$ is {\em optimal}.

The lower bound derived by B\"oinck and van Tilborg
\cite{BoiFraTil1990} for the length of  binary $(T-1)$-EC-AUED codes is
\[n_2(a,T)\geq \left\lceil (4-\frac{2}{\left\lceil a/2\right\rceil})T\right\rceil.\] In the same paper, they show that if
$n_2(a,T)=  (4-\frac{2}{\left\lceil a/2\right\rceil})T$ holds, then the code
must be a constant weight code; and if  further $a\equiv 0 \pmod
4$, then $T$ must be divisible by  $a/2$.

For non-binary codes, the lower bound of $n_q(a,T)$ was generalized
in \cite{NaydeTor2009}.
 \begin{theorem}\label{lbound}\cite{NaydeTor2009}
 $n_q(a,T)\geq GBT_q(a,T)$,
 where \[GBT_q(a,T)=\left\lceil\frac{2a(a-1)T}{a(a-\alpha)-(a-\alpha
 q)(\alpha+1)}\right\rceil\] and $\alpha =\left\lfloor a/q\right\rfloor$.
 \end{theorem}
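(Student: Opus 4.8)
The plan is to prove this Plotkin-type bound by combining a global double count over all pairs of codewords with a per-coordinate optimization. Let $\C\subseteq R^X$ be a $q$-ary $(T-1)$-EC-AUED code of size $a$ and length $n=|X|$. By Theorem~\ref{aued}, every pair of distinct codewords $\vx,\vy\in\C$ satisfies $N(\vx,\vy)\geq T$ and $N(\vy,\vx)\geq T$, and hence $N(\vx,\vy)+N(\vy,\vx)\geq 2T$. This pairwise inequality is the only structural input; everything else is counting.

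First I would set up the double count. Define $S=\sum N(\vx,\vy)$, summed over all ordered pairs of distinct codewords. By the symmetry of relabeling, $\sum N(\vx,\vy)=\sum N(\vy,\vx)$, so $2S=\sum_{\vx\neq\vy}[N(\vx,\vy)+N(\vy,\vx)]\geq 2T\cdot a(a-1)$, which yields the lower bound $S\geq Ta(a-1)$. Next I would rewrite $S$ coordinate by coordinate. For each $i\in X$, let $m_{i,j}$ be the number of codewords whose $i$-th symbol equals $j$, so that $\sum_{j=0}^{q-1}m_{i,j}=a$. Since $N(\vx,\vy)$ counts coordinates at which $\vx_i>\vy_i$, grouping by coordinate and by the pair of values gives $S=\sum_{i\in X}\sum_{0\leq k<j\leq q-1}m_{i,j}m_{i,k}$.

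The crux is to bound the per-coordinate contribution $\sum_{k<j}m_{i,j}m_{i,k}$ from above, uniformly over all nonnegative integer distributions with $\sum_j m_{i,j}=a$. Using the identity $\left(\sum_j m_{i,j}\right)^2=\sum_j m_{i,j}^2+2\sum_{k<j}m_{i,j}m_{i,k}$, maximizing the cross term is equivalent to minimizing $\sum_j m_{i,j}^2$, which by convexity is attained at the most balanced distribution: with $\alpha=\lfloor a/q\rfloor$ and $r=a-\alpha q$, one takes $r$ parts equal to $\alpha+1$ and $q-r$ parts equal to $\alpha$. A direct computation of $\min\sum_j m_{i,j}^2=q\alpha^2+r(2\alpha+1)$ then shows that the maximum of the cross term equals $M:=\tfrac12\bigl[a(a-\alpha)-(a-\alpha q)(\alpha+1)\bigr]$, which is exactly half the denominator in $GBT_q(a,T)$. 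Combining the two bounds gives $Ta(a-1)\leq S\leq nM$, so that $n\geq Ta(a-1)/M=\frac{2a(a-1)T}{a(a-\alpha)-(a-\alpha q)(\alpha+1)}$; since $n$ is an integer, $n\geq GBT_q(a,T)$.

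The main obstacle I anticipate is the per-coordinate optimization. One must argue rigorously that the balanced distribution minimizes $\sum_j m_{i,j}^2$ over \emph{integer} distributions (the standard equitable-partition/convexity argument, where any two parts differing by at least two can be evened out to strictly decrease the sum of squares), and then carefully verify the algebraic identity $a^2-\min\sum_j m_{i,j}^2 = a(a-\alpha)-(a-\alpha q)(\alpha+1)$ so that the resulting bound matches the stated form. The remaining steps are routine counting once this optimization is in hand.
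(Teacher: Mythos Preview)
Your argument is correct and is the standard Plotkin-type derivation of this bound. Note, however, that the paper does not actually prove Theorem~\ref{lbound}; it merely quotes the result from \cite{NaydeTor2009}, so there is no ``paper's own proof'' to compare against. Your double-counting of $S=\sum_{\vx\neq\vy}N(\vx,\vy)$, the per-coordinate rewriting via the multiplicities $m_{i,j}$, and the convexity/equitable-partition step minimizing $\sum_j m_{i,j}^2$ are exactly the ingredients used in the original source; the algebraic identity $a^2-\min\sum_j m_{i,j}^2=a(a-\alpha)-(a-\alpha q)(\alpha+1)$ checks out (write $r=a-\alpha q$ and expand both sides to $a^2-q\alpha^2-r(2\alpha+1)$).
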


The function $GBT$ has a property that for all $\mu\geq 0$,
$GBT_q(q\mu+(q-1),T)=GBT_q(q\mu+q,T)$. That is,  by deleting one
codeword from the optimal code of size $q\mu+q$, we obtain an
optimal code of size $q\mu+(q-1)$.

\begin{lemma}\label{deleter}\cite{NaydeTor2009} If  $n_q(q\mu+q,T)=GBT_q(q\mu+q,T)$, then
$n_q(q\mu+q-1,T)=GBT_q(q\mu+q-1,T)$.
\end{lemma}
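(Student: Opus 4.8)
The plan is to prove the claimed equality by sandwiching $n_q(q\mu+q-1,T)$ between matching lower and upper bounds. The lower bound is immediate: Theorem~\ref{lbound} already gives $n_q(q\mu+q-1,T)\geq GBT_q(q\mu+q-1,T)$, so the entire content of the lemma lies in establishing the reverse inequality $n_q(q\mu+q-1,T)\leq GBT_q(q\mu+q-1,T)$.

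To obtain this upper bound, I would exhibit a $(T-1)$-EC-AUED code of size $q\mu+q-1$ whose length equals $GBT_q(q\mu+q-1,T)$. The natural candidate is produced by deletion: by hypothesis there exists an optimal code $\C\subseteq R^X$ of size $q\mu+q$ with $|X|=n_q(q\mu+q,T)=GBT_q(q\mu+q,T)$. Choosing any codeword $\vc\in\C$ and setting $\C'=\C\setminus\{\vc\}$ yields a code of size $q\mu+q-1$ sitting inside the same ambient space $R^X$, so its length is still $GBT_q(q\mu+q,T)$ (we do not even need to attempt any coordinate deletion, since an upper bound is all that is required).

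The one point requiring verification is that $\C'$ remains a $(T-1)$-EC-AUED code; this is immediate from the characterization in Theorem~\ref{aued}, since the conditions $N(\vx,\vy)\geq T$ and $N(\vy,\vx)\geq T$ are imposed only on pairs of distinct codewords, and every such pair in $\C'$ is already a pair in $\C$. Hence deletion can only remove constraints, never violate them. At this stage the key stated fact—the identity $GBT_q(q\mu+(q-1),T)=GBT_q(q\mu+q,T)$—enters: the length $GBT_q(q\mu+q,T)$ of $\C'$ equals exactly $GBT_q(q\mu+q-1,T)$, giving $n_q(q\mu+q-1,T)\leq GBT_q(q\mu+q-1,T)$ and completing the sandwich.

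In truth there is no substantial obstacle here; the lemma is a clean consequence of the deletion argument combined with the flatness of $GBT$ across the two consecutive sizes $q\mu+q-1$ and $q\mu+q$. The only thing one must not overlook is that the argument relies crucially on this $GBT$ identity: deletion by itself always furnishes a code of size $q\mu+q-1$ and length $GBT_q(q\mu+q,T)$, but that length matches the target lower bound $GBT_q(q\mu+q-1,T)$ only because the two $GBT$ values happen to coincide.
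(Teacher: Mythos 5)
Your proof is correct and matches the paper's own reasoning: the paper (which cites this lemma from Naydenova--Kl{\o}ve) justifies it by exactly the same observation, namely that deleting one codeword from an optimal code of size $q\mu+q$ yields a code of size $q\mu+q-1$ whose length $GBT_q(q\mu+q,T)$ coincides with the lower bound $GBT_q(q\mu+q-1,T)$ by the flatness of $GBT$ across these two sizes. Your additional care in noting that the EC-AUED property is preserved under deletion and that no coordinate puncturing is needed is a correct and welcome elaboration of the same argument.
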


In fact, Lemma~\ref{deleter} can be extended whenever $n_q(a,T)=GBT_q(a,T)$ and $GBT_q(a',T)=GBT_q(a,T)$, for $a'<a$.

Both the B\"oinck-van Tilborg bound and $GBT$ bound are closely  related to Plotkin bound, where the codes achieve the bounds when each symbol occurs almost the same number of times in a fixed position. In such cases, concatenating short codes is a very useful method to construct optimal long codes \cite{MackSeb84ternarycodes}. As stated in the following lemma, Naydenova and Kl{\o}ve
\cite{NaydeTor2009} showed that optimal
$t$-EC-AUED codes could be obtained by concatenating two optimal short codes for fixed $q$, $a$ and general $T$.

\begin{lemma}\label{recur}\cite{NaydeTor2009} If $n_q(a,T_1)=GBT_q(a,T_1)$,
$n_q(a,T_2)=GBT_q(a,T_2)$, and
\[GBT_q(a,T_1)+GBT_q(a,T_2)=GBT_q(a,T_1+T_2),\] then \[n_q(a,T_1+T_2)=GBT_q(a,T_1+T_2).\]
\end{lemma}

By Theorem~\ref{lbound}, we have $n_q(a,T)\geq 2T$ if $q\geq a$. In
fact, $n_q(a,T)= 2T$ in this case since the $a\times 2T$ array
formed by $T$ column vectors $(1,2,\ldots,a)$ and $T$ column vectors
$(a,a-1,\ldots,1)$ is an optimal code. From now on, we assume that
$q<a$.

In  \cite{NaydeTor2009}, the values of $n_2(a,T)$ for $a\leq 14$ and
all $T$ have been determined by direct constructions and the
B\"oinck-van Tilborg bound. For ternary case, they constructed some
optimal codes up to size $9$. We summarize their results for ternary
codes as below.

\begin{lemma}\label{known}\cite{NaydeTor2009} For $T\geq 1$, we have
 \begin{enumerate}[(i)]
\item $n_3(a,T)=GBT_3(a,T)$ for $a\in \{4,5,6\}$;
\item $\left\lceil21T/8\right\rceil\leq n_3(7,T)\leq \left\lceil8T/3\right\rceil$;
\item $n_3(a,T)= \left\lceil8T/3\right\rceil=GBT_3(a,T)$
 for $a\in\{8,9\}$ and $T\not\equiv 1 \pmod 3$.
\end{enumerate}
\end{lemma}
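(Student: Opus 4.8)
The lower bounds in all three parts are read off directly from Theorem~\ref{lbound}. Computing the $GBT$ function at $q=3$ gives $GBT_3(4,T)=\lceil 12T/5\rceil$, $GBT_3(5,T)=GBT_3(6,T)=\lceil 5T/2\rceil$, $GBT_3(7,T)=\lceil 21T/8\rceil$, and $GBT_3(8,T)=GBT_3(9,T)=\lceil 8T/3\rceil$. This already yields every stated lower bound, including the value $\lceil 21T/8\rceil$ in (ii), so the entire burden of the proof is to produce codes of the matching lengths.

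My engine for the upper bounds is concatenation combined with Lemma~\ref{recur}. If $\C_1$ and $\C_2$ are codes of the same size $a$, indexed by a common labelling of codewords, with asymmetric distances $T_1$ and $T_2$, then juxtaposing matched codewords produces a code of size $a$ and length $\mathrm{len}(\C_1)+\mathrm{len}(\C_2)$ in which every $N(\vx,\vy)$ is the sum of the two coordinatewise contributions; hence its asymmetric distance is at least $T_1+T_2$ by Theorem~\ref{aued}. Whenever the additivity hypothesis $GBT_3(a,T_1)+GBT_3(a,T_2)=GBT_3(a,T_1+T_2)$ of Lemma~\ref{recur} holds, concatenating two optimal seeds returns an optimal code. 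That hypothesis is automatic as soon as one of the two arguments makes the ceiling an integer: the relevant periods are $T\equiv 0 \pmod 5$ for $a=4$ (value $12$), $T\equiv 0 \pmod 2$ for $a\in\{5,6\}$ (value $5$), and $T\equiv 0 \pmod 3$ for $a\in\{7,8,9\}$ (value $8$).

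So the task collapses to exhibiting optimal seed codes on one full set of residues modulo the period and then iterating Lemma~\ref{recur}, each iteration concatenating the running code with a fixed integer-length block. Concretely I would search, by hand or machine over short arrays and verify via Theorem~\ref{aued}, for optimal codes of size $4$ at $T=1,\dots,5$ (lengths $3,5,8,10,12$), of sizes $5$ and $6$ at $T=1,2$ (lengths $3,5$), and of sizes $8$ and $9$ at $T=2,3$ (lengths $6,8$). Writing an arbitrary $T$ as a residue plus a multiple of the period and appending the corresponding number of integer-length blocks then delivers (i) for all $T\ge 1$ and (iii) for all admissible $T\equiv 0,2\pmod 3$. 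For $a=7$ the two relevant $GBT$ functions differ, so deleting a codeword from a size-$8$ code does not preserve $GBT$-optimality (the extension of Lemma~\ref{deleter} requires equal $GBT$ values, whereas $\lceil 21T/8\rceil\neq\lceil 8T/3\rceil$ here); instead I would build size-$7$ codes of lengths $3,6,8$ at $T=1,2,3$ and concatenate them directly. Plain concatenation suffices in this case since we only claim the upper bound $\lceil 8T/3\rceil$, which already exceeds $GBT_3(7,T)=\lceil 21T/8\rceil$, giving $n_3(7,T)\le\lceil 8T/3\rceil$ for every $T$.

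The one genuine obstacle is the seed at residue $T\equiv 1\pmod 3$ for $a\in\{8,9\}$, and this is exactly why (iii) excludes that residue. An optimal such seed would be a size-$8$ or size-$9$ antichain (a $0$-EC-AUED code, i.e.\ a set of pairwise unordered vectors) of length $GBT_3(8,1)=3$ inside $\{0,1,2\}^3$. But the largest antichain there is the middle weight layer, namely the vectors of coordinate sum $3$, of which there are exactly $7$ by the de Bruijn--Tengbergen--Kruyswijk theorem. Thus no size-$8$ code of length $3$ exists, $n_3(8,1)>GBT_3(8,1)$, and the recursion cannot be seeded at this residue, so equality can be asserted only for $T\equiv 0,2\pmod 3$. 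The very same count of $7$ is what makes a length-$3$ seed of size $7$ available, which is why the $a=7$ upper bound survives for all $T$. The remaining labour is just the explicit construction of the finitely many seeds, routine but to be checked carefully so that each attains its $GBT$ length exactly.
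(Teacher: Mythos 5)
This lemma is quoted from \cite{NaydeTor2009}; the paper gives no proof of it, so there is no in-paper argument to compare against line by line. Judged on its own, your reconstruction has the right skeleton, and it is the same machinery the paper uses everywhere else: your $GBT_3$ evaluations are all correct ($\lceil 12T/5\rceil$, $\lceil 5T/2\rceil$, $\lceil 21T/8\rceil$, $\lceil 8T/3\rceil$), the observation that the additivity hypothesis of Lemma~\ref{recur} holds whenever one argument makes the ceiling exact is correct, your handling of $a=7$ by plain concatenation (since $\lceil 21T/8\rceil\neq\lceil 8T/3\rceil$ blocks both Lemma~\ref{recur} and the deletion trick) is correct, and your explanation of the excluded residue in (iii) via de Bruijn--Tengbergen--Kruyswijk is exactly right --- it matches the paper's own remark at the end of Section~\ref{ter} that $n_3(a,1)=4=GBT_3(a,1)+1$ for $8\le a\le 19$.

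The gap is that the seeds are asserted, not produced, and they are precisely where the content of the cited result lives. Some of them you could have discharged from objects in this very paper: the size-$6$, $T=2$, length-$5$ seed is Example~\ref{k3} (Construction~\ref{con2} with $k=3$); the size-$9$ seeds at $T=3$ (length $8$) and $T=2$ (length $6$) come from Lemma~\ref{const} applied to the affine plane RBIBD$(9,3,1)$ and to the same design with one parallel class removed, and the size-$8$ seeds follow by deleting a codeword since $GBT_3(8,T)=GBT_3(9,T)$. But the $a=4$ family is genuinely harder than you allow: your own additivity criterion shows the recursion for $a=4$ needs seeds at all of $T=1,\dots,5$, and the $T=5$ seed (length $12$) cannot be obtained by deletion from a larger code (since $GBT_3(5,5)=13\neq 12$) nor by concatenating smaller $a=4$ seeds ($3+10=5+8=13\neq 12$); it requires a bespoke construction, as do the $T=2,3,4$ seeds. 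Calling these ``routine'' search hides the only nontrivial step; as written, the proposal is a correct reduction of the lemma to a finite list of explicit codes, not a proof of it.
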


\section{A Construction from One-Factorizations}\label{fac}

In this section, we give a construction of optimal $q$-ary
$t$-EC-AUED codes based on one-factorizations, which yield our main result (ii)  by the extension of Lemma~\ref{deleter}.
%
For integers $m\leq n$,
the set of integers $\{m,m+1,\ldots,n\}$ is denoted by $[m,n]$. When $m=1$, the
set $[1,n]$ is further abbreviated to $[n]$.

Let $a=2k-1$. The ring $\bbZ/a\bbZ$ is denoted by $\bbZ_a$. Let
$K_a$ be a complete graph with vertex set $\bbZ_a$. For each $j\in
\bbZ_a$, take
\begin{align}\label{factor}
T_j=\{\{t+j,-t+j\}: 1\leq t\leq k-1\},
\end{align}
 where the addition is in $\bbZ_a$. Then $\{T_j:j\in
\bbZ_a\}$ is a {\em near one-factorization of $K_a$}. Each $T_j$ is
a {\em near one-factor} which misses the vertex $j$.

\begin{construction}
\label{con1} For each $k\geq 2$, construct a $(2k-1)\times (2k-1)$ array $A$ over
$[0,k-1]$, where rows and columns are indexed by
$\bbZ_{2k-1}$. For a cell in the $i$th row and the $j$th column,
let $A_{i,j}=0$ if $i=j$ and $A_{i,j}=x$ if $i\in\{x+j,-x+j\}$.  Let $\A$ be the collection of rows of $A$.
\end{construction}

\begin{theorem}
\label{nearfactorization}$n_k(2k-1,k-1)=2k-1$ for all integers $k\geq
2$.
\end{theorem}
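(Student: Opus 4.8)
The plan is to pin down the value by sandwiching it between the $GBT$ lower bound and the upper bound furnished by Construction~\ref{con1}. For the lower bound I would apply Theorem~\ref{lbound} with $q=k$, $a=2k-1$, $T=k-1$. The decisive simplification is that $\alpha=\lfloor (2k-1)/k\rfloor=1$ for every $k\geq 2$, so a direct computation gives the denominator
\[
a(a-\alpha)-(a-\alpha q)(\alpha+1)=(2k-1)(2k-2)-2(k-1)=4(k-1)^2,
\]
against the numerator $2a(a-1)T=4(2k-1)(k-1)^2$. Hence $GBT_k(2k-1,k-1)=2k-1$, yielding $n_k(2k-1,k-1)\geq 2k-1$.

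For the upper bound I would show that the code $\A$ of Construction~\ref{con1} has length $2k-1$, size $2k-1$, and minimum asymmetric distance exactly $k-1$, which by Theorem~\ref{aued} makes it a $(k-2)$-EC-AUED code. The first step is to recognize $A_{i,j}$ as the distance between $i$ and $j$ in the cycle $C_{2k-1}$ on vertex set $\bbZ_{2k-1}$: by the definition of the near one-factors, $A_{i,j}=x$ exactly when $i-j\equiv\pm x\pmod{2k-1}$ with $0\leq x\leq k-1$, which is the cyclic distance. Since the unique column with entry $0$ in row $i$ is $j=i$, the rows are pairwise distinct, so $|\A|=2k-1$.

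The heart of the argument is to evaluate $N(\vr_i,\vr_{i'})$ for distinct rows $\vr_i,\vr_{i'}$, which counts the columns $j$ where $i$ is strictly farther from $j$ than $i'$ is. Writing $G=N(\vr_i,\vr_{i'})$, $L=N(\vr_{i'},\vr_i)$, and letting $E$ be the number of columns equidistant from $i$ and $i'$, I have $G+L+E=2k-1$. The reflection $j\mapsto i+i'-j$ of $\bbZ_{2k-1}$ interchanges the distances to $i$ and to $i'$, hence restricts to a bijection from the $G$-columns onto the $L$-columns; so $G=L$ and $2G+E=2k-1$. As $2k-1$ is odd, $E$ is odd, forcing $E\geq 1$ and $G\leq k-1$.

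The remaining and only genuinely combinatorial step, which I expect to be the main obstacle, is to prove $E=1$. Here I would exploit that $C_{2k-1}$ has odd order: the two arcs joining $i$ and $i'$ have lengths summing to $2k-1$, so exactly one arc has even length $2\ell$ with $\ell\leq k-1$ and the other has odd length. The midpoint vertex of the even arc lies at cyclic distance $\ell$ from both $i$ and $i'$, while the odd arc contains no vertex with equal arc-distances to the two endpoints; a short check confirms this midpoint is the \emph{unique} equidistant vertex. Thus $E=1$, whence $G=L=k-1$, so every pair of codewords satisfies $N(\vr_i,\vr_{i'})=N(\vr_{i'},\vr_i)=k-1$. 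By Theorem~\ref{aued} the code is $(k-2)$-EC-AUED, giving $n_k(2k-1,k-1)\leq 2k-1$ and, with the lower bound, the claimed equality.
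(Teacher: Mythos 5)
Your proposal is correct, and at its core it is the same argument as the paper's: the lower bound from Theorem~\ref{lbound} (your computation with $\alpha=1$ and denominator $4(k-1)^2$ is exactly right), and the upper bound from Construction~\ref{con1} via the reflection symmetry of the odd cycle on $\bbZ_{2k-1}$. Indeed, your reflection $j\mapsto i+i'-j$ is precisely the paper's pairing of columns $j_0-y\leftrightarrow j_0+y$, where $j_0$ is the column in which the two rows share the value $x_0$ (so $i+i'=2j_0$). Where the two write-ups genuinely differ is in how the exact count $k-1$ is extracted. The paper observes $A_{i_1,j_0-y}=A_{i_2,j_0+y}$ and concludes $N(\vx,\vy)=N(\vy,\vx)=k-1$ at once; strictly speaking, the pairing alone only yields $N(\vx,\vy)=N(\vy,\vx)$, and the value $k-1$ additionally requires that the two rows differ in every column other than $j_0$, a fact the paper leaves implicit. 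Your version isolates exactly this point as the claim $E=1$ and proves it by the midpoint/parity argument, which is sound: the even arc's midpoint is equidistant, and for any other vertex, equality of cyclic distances (note: cyclic distance, not arc-distance, so the wrap-around case must be checked) would force an arc of full length $2k-1$, a contradiction. So you buy a more airtight proof of the strictness that the paper's final equality silently relies on, at the cost of a slightly longer counting argument ($G=L$, $2G+E=2k-1$, $E=1$); the paper's formulation is shorter and stays closer to the near-one-factorization language, but glosses over the step you made explicit.
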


\begin{proof} By Theorem~\ref{lbound}, we have $n_k(2k-1,k-1)\geq 2k-1$.
It suffices to prove that the code $\A$ constructed in Construction~\ref{con1} has minimum asymmetric distance $k-1$.

 For any two
rows $\vx$ and $\vy$ of $A$ indexed by $i_1$ and $i_2$ respectively, we claim that  $N(\vx,\vy)\geq k-1$. In fact,
by the definition of near one-factorization, there exists a column indexed by
$j_0$ and an element $x_0\in [k-1]$ such that
$\{i_1,i_2\}=\{x_0+j_0,-x_0+j_0\}\in T_{j_0}$, i.e.,
$A_{i_1,j_0}=A_{i_2,j_0}=x_0$. Without loss of generality, assume
that $i_1=x_0+j_0$ and $i_2=-x_0+j_0$. Thus for each $y\in
[k-1]$, we have \[i_1=-y+j_0+(x_0+y)\] and
\[i_2=y+j_0+(-x_0-y)\] in $\bbZ_{2k-1}$.
By the construction of $A$, we have $A_{i_1,j_0-y}=A_{i_2,j_0+y}$
which equals  $x_0+y$ or $-(x_0+y)$ whoever falls in
$[k-1]$. So for each $y\in [k-1]$, if
$A_{i_1,j_0-y}>A_{i_2,j_0-y}$, we must have
$A_{i_1,j_0+y}<A_{i_2,j_0+y}$. Thus $N(\vx,\vy)=N(\vy,\vx)= k-1$.
\end{proof}
\begin{example}\label{k23}Let $k=2$ and $k=3$. Then applying Construction~\ref{con1} and Theorem~\ref{nearfactorization}
gives an optimal binary $0$-EC AUED codes of size three, and an optimal ternary $1$-EC AUED codes of size five.
$$\begin{array}{ccc}
 0& 1 &1  \\
 1&0  &1  \\
  1& 1 & 0 \\
\end{array}$$

$$\begin{array}{ccccc}
 0&  1& 2 &2 & 1\\
 1& 0 & 1 &2 & 2\\
 2 & 1 & 0 & 1& 2\\
 2  &2  & 1 & 0&1 \\
  1  & 2 & 2 &1 & 0\\
\end{array}$$
\end{example}

Remark 1: The array $A$ from
Construction~\ref{con1} has extra property that in each row,
each nonzero element occurs twice and the zero element occurs
exactly once.

\begin{construction}
\label{con2} For odd integers $k\geq 3$, let $B$ be a $(2k-1)\times (2k-1)$ array with the entry  $B_{i,j}\equiv A_{i,j}+(k-1)/2
\pmod k$, where $A$ is the array from
Construction~\ref{con1}. Let $\vu$ be a
vector of length $2k-1$ with all entries being $(k-1)/2$. Denote $\B$ the collection of rows of $B$ and let $\B'=\B\cup\{\vu\}$.
\end{construction}

\begin{theorem}
\label{factorization}$n_k(2k,k-1)=2k-1$ for all odd integers $k\geq
3$.
\end{theorem}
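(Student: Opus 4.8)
The plan is to establish the lower bound and the upper bound separately, exactly as in the proof of Theorem~\ref{nearfactorization}. The lower bound $n_k(2k,k-1)\geq 2k-1$ follows immediately from Theorem~\ref{lbound} (the $GBT$ bound) applied with $a=2k$, $q=k$, $T=k-1$; this should give the same value $2k-1$ as for size $2k-1$, consistent with the extension of Lemma~\ref{deleter} mentioned after it. So the real content is to show that $\B'$, which has size $(2k-1)+1 = 2k$ and length $2k-1$, is a $(k-1)$-EC-AUED code, i.e.\ by Theorem~\ref{aued} that every pair of distinct codewords $\vx,\vy$ satisfies $N(\vx,\vy)\geq k-1$ and $N(\vy,\vx)\geq k-1$.

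**First I would** handle pairs of rows coming from $\B$ (the shifted array $B$). Since $B_{i,j}\equiv A_{i,j}+(k-1)/2 \pmod k$ is obtained from $A$ by adding the constant $(k-1)/2$ to every entry modulo $k$, the key observation is whether this shift preserves the asymmetric-distance structure. The subtlety is that reduction mod $k$ is \emph{not} order-preserving, so I cannot simply invoke Theorem~\ref{nearfactorization} as a black box. Instead I would revisit its proof: for two rows indexed by $i_1,i_2$ there is a column $j_0$ with $A_{i_1,j_0}=A_{i_2,j_0}=x_0$, and the matched pairs of columns $j_0-y$ and $j_0+y$ had entries $A_{i_1,j_0-y}=A_{i_2,j_0+y}$ equal to $\pm(x_0+y)$ reduced into $[k-1]$. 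I would check that adding $(k-1)/2\pmod k$ to both entries of such a matched pair keeps them equal, so that the same pairing argument forces $N(\vx,\vy)=N(\vy,\vx)=k-1$ among rows of $B$.

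**The main obstacle, and the genuinely new part, will be** the distances from the extra vector $\vu=((k-1)/2,\ldots,(k-1)/2)$ to each row of $B$. This is where Remark~1 becomes essential: each row of $A$ contains the value $0$ exactly once and every nonzero value in $[k-1]$ exactly twice. After the shift, a row of $B$ therefore contains the value $(k-1)/2$ exactly once (coming from the unique $0$) and each other value in $[0,k-1]$ exactly twice. I would count, for a fixed row $\vx$ of $B$, the number of coordinates where $\vx_j > (k-1)/2$ and where $\vx_j < (k-1)/2$. The shifted values are $((k-1)/2 + v)\bmod k$ as $v$ ranges over the multiset of entries in a row of $A$; I would verify that the shift is chosen precisely so that the values land symmetrically about $(k-1)/2$, yielding $k-1$ coordinates strictly above and $k-1$ strictly below (with the one coordinate equal to $(k-1)/2$ contributing to neither). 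That gives $N(\vx,\vu)=N(\vu,\vx)=k-1$, completing the verification.

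**To finish,** I would assemble the two cases: all row-row pairs give asymmetric distance exactly $k-1$ and all row-$\vu$ pairs give exactly $k-1$, so $\B'$ has minimum asymmetric distance $k-1$ and size $2k$, matching the lower bound and proving $n_k(2k,k-1)=2k-1$. I expect the arithmetic of the symmetric split around $(k-1)/2$ to hinge on the parity assumption that $k$ is odd (so that $(k-1)/2$ is an integer and the shift is well defined), which explains why the theorem is stated only for odd $k$; confirming that this parity is exactly what makes the counting balanced is the subtlety I would be most careful about.
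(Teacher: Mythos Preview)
Your proposal is correct and follows essentially the same approach as the paper: the lower bound via Theorem~\ref{lbound}, the row--row distances in $\B$ via the column-pairing identity $B_{i_1,j_0-y}=B_{i_2,j_0+y}$ (preserved because the shift by $(k-1)/2$ modulo $k$ is a bijection, so the swap structure of the matched columns survives), and the row--$\vu$ distances via Remark~1's symbol distribution. The paper's proof is terser---it simply invokes Remark~1 to reduce to showing $\B$ itself is a $(k-2)$-EC-AUED code and then cites the pairing identity from the proof of Theorem~\ref{nearfactorization}---but your explicit count of entries above and below $(k-1)/2$ is exactly the computation that underlies that reduction.
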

\begin{proof} The lower bound can be checked by Theorem~\ref{lbound}. For the upper
bound, we only need to show that $\B$ in Construction~\ref{con2} is a $(k-2)$-EC-AUED code by Remark 1. This follows from the fact that
$B_{i_1,j_0-y}=B_{i_2,j_0+y}$ for each $y\in [1,k-1]$ as in the proof of Theorem~\ref{nearfactorization}.
\end{proof}

\begin{example}\label{k3}Applying Construction~\ref{con2} and Theorem~\ref{factorization} with $k=3$ gives  an optimal ternary $1$-EC AUED codes of size six.
$$\begin{array}{ccccc}
 1&  1& 1 &1 & 1\\
 1&  2& 0 &0 & 2\\
 2& 1 & 2 &0 & 0\\
 0 & 2 & 1 & 2& 0\\
 0  &0  & 2 & 1&2 \\
  2  & 0 & 0 &2 & 1\\
\end{array}$$
\end{example}

\section{A Construction by Concatenation}\label{conca}
We first give a simple but very useful construction of EC-AUED codes by concatenation. As mentioned in Section II, this method has been widely used to construct codes achieving Plotkin type bounds. For any $q$-ary word $c=(c_1,c_2,\ldots,c_n)$, let $q-1-c:=(q-1-c_1,q-1-c_2,\ldots,q-1-c_n)$ and $c|(q-1-c)=(c_1, \ldots, c_n, q - 1 -c_1, \ldots, q - 1 -c_n)$. For any two words $\vx$ and $\vy$, the Hamming distance of $\vx$ and $\vy$, denoted by $d_H(\vx,\vy)$, is the number of positions $i$ such that $\vx_i=\vy_i$. It is obvious that $d_H(\vx,\vy)=N(\vx,\vy)+N(\vy,\vx)$.

\begin{lemma}
\label{concatenation}Let $C$ be a $q$-ary code of length $n$ with minimum Hamming
distance $d$. Then
$\{c |(q-1-c): c \in C\}$
is a $q$-ary $(d - 1)$-EC-AUED code of length $2n$ with $|C|$ words.
\end{lemma}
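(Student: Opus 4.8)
The plan is to invoke Theorem~\ref{aued}, which says that a code is $(d-1)$-EC-AUED exactly when $N(\vx,\vy)\geq d$ and $N(\vy,\vx)\geq d$ for every pair of distinct codewords. So I would fix two distinct words $c,c'\in C$ and estimate $N$ in both directions between their images $c|(q-1-c)$ and $c'|(q-1-c')$.

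First I would split each image into its first block (a copy of the original $n$ coordinates) and its second block (the complemented coordinates $q-1-c$), and count the two blocks' contributions to $N(c|(q-1-c),\,c'|(q-1-c'))$ separately. A first-block position $i$ contributes precisely when $c_i>c'_i$, while a second-block position $i$ contributes when $q-1-c_i>q-1-c'_i$, i.e. when $c_i<c'_i$. The crux is that complementation reverses the order relation, so the complemented block counts exactly those coordinates on which the strict inequality pointed the opposite way in the original block.

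Summing the two blocks, $N(c|(q-1-c),\,c'|(q-1-c'))$ equals the number of coordinates with $c_i>c'_i$ plus the number with $c_i<c'_i$, which is the total number of coordinates on which $c$ and $c'$ differ, namely $d_H(c,c')\geq d$. By the symmetric argument (swapping the roles of $c$ and $c'$) the reverse quantity $N(c'|(q-1-c'),\,c|(q-1-c))$ is also equal to $d_H(c,c')$, so both directional counts meet the required bound $d$, and the asymmetric distance of the pair is $d_H(c,c')\geq d$.

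It then remains only to record the bookkeeping: the images have length $2n$ by construction, and since distinct words $c,c'$ already disagree on the first block, the map $c\mapsto c|(q-1-c)$ is injective and the resulting code has $|C|$ words. I do not anticipate a genuine obstacle; the single delicate point is the order-reversal in the complemented half, which is exactly what forces both directional counts to collapse onto the single quantity $d_H(c,c')$ rather than the generally unequal pair $N(c,c')$ and $N(c',c)$.
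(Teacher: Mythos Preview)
Your argument is correct and is essentially identical to the paper's own proof: both split $N(c|(q-1-c),\,c'|(q-1-c'))$ into the contributions of the two halves, use that complementation reverses order so that the second half contributes $N(c',c)$, and conclude $N(\vx,\vy)=N(c,c')+N(c',c)=d_H(c,c')\geq d$ (and symmetrically for $N(\vy,\vx)$). The only addition you make is the injectivity remark, which is trivially true and not explicitly mentioned in the paper.
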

\begin{proof} For any two words $\vx=c |(q-1-c)$ and $\vy=c' |(q-1-c')$, since $N(q-1-c,q-1-c')=N(c',c)$, we have $N(\vx,\vy)=N(c,c')+N(q-1-c,q-1-c')=N(c,c')+N(c',c)=d_H(c,c')\geq d$. It's similar that $N(\vy,\vx)\geq d$.
\end{proof}

By Lemma~\ref{concatenation}, we can construct good EC-AUED codes from codes with large Hamming distance.

\begin{theorem}
\label{calay}$n_q(q^2,q)=2q+2$ for all prime powers $q$.
\end{theorem}
\begin{proof} The lower bound is checked by Theorem~\ref{lbound}. The upper bound is obtained by applying Lemma~\ref{concatenation} to  $q$-ary MDS codes of length $q + 1$ and size $q^2$ with minimum Hamming distance $q$ \cite[Chapter 5]{huffman2010fundamentals}.
\end{proof}

\subsection{Constructions from Set Systems}\label{rsys}

A {\em set system} is a pair ${\frak S}=(X,\A)$, where $X$ is a
finite set of {\em points} and $\A\subseteq 2^X$. Elements of $\A$
are called {\em blocks}. The {\em order} of $\frak S$ is the number
of points in $X$, and the {\em size} of $\frak S$ is the number of
blocks in $\A$. Let $K$ be a set of positive integers. A set system
$(X,\A)$ is {\em $K$-uniform} if $|A|\in K$ for all $A\in\A$. A {\em
parallel class} of a set system $(X,\A)$ is a set ${\cal
P}\subseteq\A$ that partitions $X$. A {\em resolvable set system} is
a set system whose set of blocks can be partitioned into parallel
classes. We refer the readers to \cite{GeMiao:HCD} for other related concepts in combinatorial design theory.

\begin{definition}
Let $(X,\A)$ be a $\{k\}$-uniform set system of order $n$. Then it
is an {\em $(n,k,\lambda)$-packing} if each pair of $X$ occurs in at
most $\lambda$ blocks of $\A$.
\end{definition}

Given a resolvable
$(qk,k,\lambda)$-packing of  $n$ parallel classes, arbitrarily order
the $q$ blocks in each parallel class by elements in $[0,q-1]$. Define an $qk\times n$  $q$-ary matrix $A$ by indexing
each column by a parallel class and each row by a point of the
packing. For each parallel class, the corresponding column has the
symbol $i$ in the rows indexed by the points in the $i$th block. Since each pair of points occurs in at most $\lambda$ blocks, the rows of $A$ form a $q$-ary code of Hamming distance at least $n-\lambda$. Note that this correspondence is the one used by Semakov and Zinoviev \cite{Semakov68} to show the equivalence between {\em equidistant codes} and RBIBDs. Recently, this method is used again to construct optimal {\em equitable symbol weight codes},  see for example \cite{chee2012optimal,dai2015two}.
 By applying Lemma~\ref{concatenation} to this equivalence, we have the following result.

\begin{lemma}
\label{const}Suppose that there exists a resolvable
$(a,k,\lambda)$-packing, which has $n$ parallel classes each
consisting of $q$ blocks, $q=a/k$. Then there exists a $q$-ary
$t$-EC-AUED code of size $a$ and length $2n$ with $t=n-\lambda-1$.
\end{lemma}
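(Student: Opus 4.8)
The plan is to make the claimed code explicit through the point--parallel-class incidence matrix described immediately before the statement, and then to feed the resulting $q$-ary code into Lemma~\ref{concatenation}. The whole argument is a two-step pipeline: build a $q$-ary code of length $n$ with good Hamming distance, then double its length via the map $c\mapsto c\,|\,(q-1-c)$.

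First I would form, from the given resolvable $(a,k,\lambda)$-packing, the $a\times n$ matrix $A$ over $[0,q-1]$ exactly as above: index the $n$ columns by the parallel classes and the $a$ rows by the points, order the $q$ blocks of each parallel class by the symbols $0,1,\ldots,q-1$, and set the entry of $A$ in the row of a point $p$ and the column of the $j$th parallel class to be the label of the block of that class containing $p$. This entry is well defined because a parallel class partitions $X$, so every point lies in exactly one of its $q$ blocks. Let $C$ be the set of rows of $A$; it is a $q$-ary code of length $n$ having exactly $a$ codewords.

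The key step is to show that $C$ has minimum Hamming distance at least $n-\lambda$. Fix two distinct points $p,p'$ and examine a coordinate $j$: the rows of $p$ and $p'$ agree in position $j$ precisely when $p$ and $p'$ lie in a common block of the $j$th parallel class. Since the block set of a resolvable packing is partitioned among the parallel classes, each block belongs to exactly one class, and hence the coordinates in which the two rows agree are in bijection with the blocks containing both $p$ and $p'$. The packing condition bounds the latter number by $\lambda$, so the two rows disagree in at least $n-\lambda$ positions; as $p,p'$ were arbitrary, $C$ has minimum Hamming distance at least $n-\lambda$.

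Finally I would apply Lemma~\ref{concatenation} to $C$ with $d=n-\lambda$: the code $\{c\,|\,(q-1-c):c\in C\}$ is a $q$-ary $(d-1)$-EC-AUED code of length $2n$ with $|C|=a$ words, that is, a $t$-EC-AUED code of size $a$ and length $2n$ with $t=n-\lambda-1$, as required. I do not expect a genuine obstacle; the only point demanding care is the counting in the distance step, namely the observation that distinct parallel classes contribute disjoint blocks, so that ``agreements between two rows'' and ``common blocks of the two points'' are literally the same quantity. It is precisely this identification that lets the packing parameter $\lambda$ control the Hamming distance and thereby the EC-AUED parameter $t$.
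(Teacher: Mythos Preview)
Your proposal is correct and follows exactly the paper's own argument: build the $a\times n$ point--parallel-class incidence matrix, observe that the packing condition forces minimum Hamming distance at least $n-\lambda$, and then apply Lemma~\ref{concatenation} to obtain the $(n-\lambda-1)$-EC-AUED code of length $2n$. The only point you make more explicit than the paper is the bijection between coordinate agreements and common blocks, which is a welcome clarification rather than a departure.
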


Next, we apply Lemma~\ref{const} to some concrete combinatorial
objects to determine the values of $n_q(a,T)$.

\begin{definition}
Let $(X,\A)$ be a $\{k\}$-uniform set system and let $\G$ be a
partition of $X$ into subsets, called {\em groups}. The triple
$(X,\G,\A)$ is a {\em group divisible design} (GDD) when every
2-subset of $X$ not contained in a group is contained in exactly
$\lambda$ block, and $|A\cap G|\leq 1$ for all $A\in\A$ and
$G\in\G$.
\end{definition}

We denote such a GDD $(X,\G,\A)$ by $(k,\lambda)$-GDD. It is obvious
that a $(k,\lambda)$-GDD $(X,\G,\A)$ is an $(n,k,\lambda)$-packing
with $n=|X|$.  The {\em type} of a GDD $(X,\G,\A)$ is the multiset
$\langle |G| : G\in\G\rangle$. When more convenient, the exponential
notation is used to describe the type of a GDD: a GDD of type
$g_1^{t_1} g_2^{t_2} \cdots g_s^{t_s}$ is a GDD where there are
exactly $t_i$ groups of size $g_i$, $i\in[s]$. When a GDD is
resolvable, we denote it by RGDD. A $(k,\lambda)$-GDD of type $1^n$
is called a {\em balanced incomplete block design}, denoted
 by BIBD$(n,k,\lambda)$ and RBIBD$(n,k,\lambda)$ when it is resolvable.

\begin{theorem}\label{rgdd} \cite{ChanDukes:2013}
Fix integers $g,\lambda\geq 1$ and $k\geq 2$. There exists an
integer $u_0(g,k)$ such that for all $u\geq u_0$,  a
$(k,\lambda)$-RGDD of type $g^u$ exists if and only if $\lambda g(u-1)
\equiv 0 \pmod{ k- 1}$ and $gu \equiv 0 \pmod{ k}$.
\end{theorem}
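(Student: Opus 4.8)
The two divisibility conditions are necessary by elementary counting, so I would dispose of them first. Resolvability forces every parallel class to partition the $gu$ points into blocks of size $k$, whence $k\mid gu$, i.e. $gu\equiv 0\pmod{k}$. For the second condition, fix a point $x$: it is paired with exactly $g(u-1)$ points lying outside its own group, each such pair is covered $\lambda$ times, and each block through $x$ accounts for $k-1$ of these points. Hence the replication number $r=\lambda g(u-1)/(k-1)$ must be a positive integer, giving $(k-1)\mid \lambda g(u-1)$. This direction is routine.

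The substance of the theorem is the asymptotic sufficiency, and here I would invoke the general machinery for the asymptotic existence of resolvable designs rather than attempt an ad hoc construction. The natural framework is to regard a $(k,\lambda)$-RGDD of type $g^u$ as a resolvable decomposition of $\lambda$ copies of the complete multipartite graph with $u$ parts (groups) of size $g$ into cliques $K_k$, and to apply the Lamken--Wilson theorem on decompositions of edge-colored complete graphs, which guarantees that the obvious divisibility and resolvability conditions become sufficient once the number of groups $u$ is large. Concretely the plan has two ingredients: (a) the asymptotic existence of the associated $(k,\lambda)$-frames of type $g^u$ (holey parallel-class designs, in which the blocks partition into partial parallel classes each missing a single group), supplied by the Lamken--Wilson/Wilson's-Fundamental-Construction theory with $g,k,\lambda$ fixed and $u\to\infty$; and (b) a filling-in-groups construction that converts such frames, together with small resolvable ingredients such as Kirkman-type RBIBDs furnished by Ray-Chaudhuri--Wilson, into genuine RGDDs of the required type. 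Fixing $g,k,\lambda$ and letting only $u$ grow is what brings the problem within reach of these asymptotic theorems and is precisely why the threshold $u_0$ is permitted to depend on $g$ and $k$.

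The main obstacle is that resolvability is a global constraint which is \emph{not} preserved by the naive recursive (GDD-product and PBD-closure) constructions used to prove Wilson's theorem for ordinary designs: one must carry the parallel-class structure through every step, which is exactly what the frame formalism and the Lamken--Wilson theorem are engineered to do, and verifying their admissibility hypotheses (that the lattice of necessary conditions is generated correctly and that a single seed example exists) is the delicate point. A secondary difficulty is bookkeeping: the two congruences carve out several residue classes of $u$, and no single construction covers them all, so one combines a handful of constructions and absorbs the finitely many small exceptions into the unspecified constant $u_0(g,k)$. With these pieces in place, every $u\geq u_0$ satisfying $gu\equiv 0\pmod{k}$ and $\lambda g(u-1)\equiv 0\pmod{k-1}$ admits the desired RGDD, which settles the sufficiency direction.
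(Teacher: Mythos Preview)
The paper does not prove this theorem at all: it is quoted verbatim from Chan, Dukes, Lamken and Ling \cite{ChanDukes:2013} and used as a black box, so there is no ``paper's own proof'' to compare against. Your proposal is therefore not a reconstruction of anything in the present paper but rather a sketch of the argument in the cited reference. As such it is accurate in spirit: the necessity of the two congruences is indeed the routine counting you give, and the sufficiency in \cite{ChanDukes:2013} does proceed via the Lamken--Wilson edge-colored decomposition theorem together with frame constructions and filling-in arguments, with finitely many exceptional cases absorbed into $u_0$. What you have written is a fair high-level outline of that paper, though of course the actual work lies in verifying the Lamken--Wilson hypotheses, which you correctly flag as the delicate point but do not carry out.
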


\begin{lemma}
\label{fromgdd}Suppose that there exists an $(\alpha,\lambda)$-RGDD
of type $s^n$, such that $2\lambda(s-1)<sn-\alpha$. Then $n_{q}(sn,
T)=2\lambda s(n-1)/(\alpha-1)$, where $T=\frac{\lambda
s(n-1)}{\alpha -1}-\lambda$ and $q=\frac{sn}{\alpha}$.
\end{lemma}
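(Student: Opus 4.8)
The plan is to establish the exact value $n_q(sn,T) = 2\lambda s(n-1)/(\alpha-1)$ by proving matching upper and lower bounds, where $q = sn/\alpha$. The upper bound will come from an explicit code construction via Lemma~\ref{const}, and the lower bound from the $GBT$ bound of Theorem~\ref{lbound}.

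\emph{Upper bound.} First I would observe that an $(\alpha,\lambda)$-RGDD of type $s^n$ is, as noted in the excerpt, a resolvable $(sn,\alpha,\lambda)$-packing. To apply Lemma~\ref{const} with block size $k=\alpha$, I need to count the number of parallel classes. A $(k,\lambda)$-RGDD of type $s^n$ has $r = \lambda s(n-1)/(k-1)$ as its replication number (each point lies in $r$ blocks), and since each parallel class covers every point exactly once, the number of parallel classes equals $r = \lambda s(n-1)/(\alpha-1)$. Each parallel class partitions the $sn$ points into blocks of size $\alpha$, giving exactly $q = sn/\alpha$ blocks per class, matching the hypothesis of Lemma~\ref{const}. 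Applying Lemma~\ref{const} then yields a $q$-ary $t$-EC-AUED code of size $sn$ and length $2r = 2\lambda s(n-1)/(\alpha-1)$ with $t = r - \lambda - 1$, i.e.\ with minimum asymmetric distance $T = r - \lambda = \frac{\lambda s(n-1)}{\alpha-1} - \lambda$. This is exactly the claimed length, so $n_q(sn,T) \leq 2\lambda s(n-1)/(\alpha-1)$.

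\emph{Lower bound.} For the matching lower bound I would evaluate $GBT_q(sn,T)$ from Theorem~\ref{lbound} and show it equals the same expression. Here $a = sn$ and $\alpha_{GBT} = \lfloor a/q \rfloor = \lfloor sn/(sn/\alpha) \rfloor = \alpha$, so the floor in the $GBT$ formula is exactly $\alpha$ and the denominator simplifies cleanly: with $a - \alpha q = sn - \alpha \cdot (sn/\alpha) = 0$, the denominator $a(a-\alpha) - (a-\alpha q)(\alpha+1)$ collapses to $sn(sn-\alpha)$. Substituting the value of $T$, the ratio inside the ceiling becomes $2sn(sn-1)T / (sn(sn-\alpha)) = 2(sn-1)T/(sn-\alpha)$, and plugging in $T = \lambda s(n-1)/(\alpha-1) - \lambda = \lambda(sn - \alpha)/(\alpha-1)$ should cause the factor $(sn-\alpha)$ to cancel, leaving precisely $2\lambda(sn-1)/(\alpha-1)$.

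\emph{The main obstacle.} The delicate point is \emph{integrality}: the claimed length must be an integer so that the ceiling in $GBT_q$ is vacuous and the two bounds meet exactly. I expect the hypothesis $2\lambda(s-1) < sn - \alpha$ to be what guarantees $\lfloor a/q\rfloor = \alpha$ is clean and, together with the RGDD existence conditions, forces $\lambda s(n-1)/(\alpha-1)$ to be a genuine integer (the RGDD exists only when $\lambda s(n-1) \equiv 0 \pmod{\alpha-1}$, which is exactly the divisibility needed). I would verify carefully that $T = \lambda(sn-\alpha)/(\alpha-1)$ simplifies correctly from the stated form and that the inequality condition ensures the $GBT$ computation uses the intended value of $\alpha$ rather than a larger floor; this algebraic bookkeeping, rather than any conceptual difficulty, is the crux. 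Once the arithmetic confirms $GBT_q(sn,T) = 2\lambda s(n-1)/(\alpha-1)$, the lower bound of Theorem~\ref{lbound} and the constructive upper bound coincide, completing the proof.
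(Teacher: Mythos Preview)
Your strategy---upper bound via Lemma~\ref{const} applied to the RGDD as a resolvable $(sn,\alpha,\lambda)$-packing, lower bound via the $GBT$ bound---is exactly the paper's approach. However, your algebra in the lower-bound step contains a slip, and as a result you misidentify the role of the hypothesis $2\lambda(s-1)<sn-\alpha$.

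The error is in the claimed simplification $T=\lambda(sn-\alpha)/(\alpha-1)$. In fact
\[
T=\frac{\lambda s(n-1)}{\alpha-1}-\lambda
 =\frac{\lambda\bigl(s(n-1)-(\alpha-1)\bigr)}{\alpha-1}
 =\frac{\lambda(sn-s-\alpha+1)}{\alpha-1},
\]
which equals $\lambda(sn-\alpha)/(\alpha-1)$ only when $s=1$. With $a=sn$ and $a/q=\alpha$ exactly (so the floor is $\alpha$ automatically, no hypothesis needed for that), the $GBT$ ratio is
\[
\frac{2(sn-1)T}{sn-\alpha}
 =\frac{2\lambda(sn-1)(sn-s-\alpha+1)}{(\alpha-1)(sn-\alpha)}
 =\frac{2\lambda s(n-1)}{\alpha-1}-\frac{2\lambda(s-1)}{sn-\alpha},
\]
using the identity $(sn-1)(sn-s-\alpha+1)=s(n-1)(sn-\alpha)-(s-1)(\alpha-1)$. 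Writing $N=2\lambda s(n-1)/(\alpha-1)$ (an integer, since the RGDD forces $\lambda s(n-1)\equiv 0\pmod{\alpha-1}$), the ratio is $N-2\lambda(s-1)/(sn-\alpha)$. The hypothesis $2\lambda(s-1)<sn-\alpha$ is precisely what makes this correction lie in $[0,1)$, so that $\lceil\,\cdot\,\rceil=N$ and the lower bound matches. Once you redo the arithmetic this way, your proof goes through.
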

\begin{proof}It is easy to check that when $2\lambda(s-1)<sn-\alpha$, we have $n_{q}(sn,
T)\geq 2\lambda s(n-1)/(\alpha-1)$ by Theorem~\ref{lbound}. The
equality could be obtained by Lemma~\ref{const} and the fact that
the given GDD is a resolvable $(sn,\alpha,\lambda)$-packing.
\end{proof}

\begin{theorem}
\label{asymprgdd}Given positive integers $s$, $\lambda\geq1$ and $\alpha\geq 2$, $n_{q}(sn, T)=2\lambda s(n-1)/(\alpha-1)$ for all
sufficiently large $n$ satisfying $sn\equiv 0 \pmod{\alpha}$ and
$\lambda s(n-1)\equiv 0 \pmod{\alpha-1}$, where $T=\frac{\lambda
s(n-1)}{\alpha -1}-\lambda$ and $q=\frac{sn}{\alpha}$.
\end{theorem}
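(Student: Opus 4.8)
The plan is to combine the asymptotic existence result for RGDDs (Theorem~\ref{rgdd}) with the exact-length computation furnished by Lemma~\ref{fromgdd}. Theorem~\ref{asymprgdd} is essentially the ``for all sufficiently large $n$'' packaging of Lemma~\ref{fromgdd}: the latter gives the desired equality $n_q(sn,T)=2\lambda s(n-1)/(\alpha-1)$ whenever an $(\alpha,\lambda)$-RGDD of type $s^n$ exists and the inequality $2\lambda(s-1)<sn-\alpha$ holds. So the entire proof reduces to verifying that, under the stated divisibility hypotheses and for $n$ large, such an RGDD exists and the side inequality is automatically satisfied.

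First I would apply Theorem~\ref{rgdd} with $g=s$ and $k=\alpha$. That theorem asserts the existence of an $(\alpha,\lambda)$-RGDD of type $s^n$ for all $n\geq u_0(s,\alpha)$ precisely when $\lambda s(n-1)\equiv 0\pmod{\alpha-1}$ and $sn\equiv 0\pmod{\alpha}$. These are exactly the two congruence conditions imposed in the statement of Theorem~\ref{asymprgdd}, so the necessary-and-sufficient criterion of Theorem~\ref{rgdd} is met, and the RGDD exists once $n$ exceeds the threshold $u_0(s,\alpha)$.

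Next I would dispose of the side condition $2\lambda(s-1)<sn-\alpha$ required by Lemma~\ref{fromgdd}. Here $s$, $\lambda$ and $\alpha$ are fixed constants while $n\to\infty$, so the left-hand side $2\lambda(s-1)$ is a fixed number whereas the right-hand side $sn-\alpha$ grows linearly in $n$. Hence there is some $n_1$ beyond which the inequality holds for all $n\geq n_1$. Taking $n$ larger than both $u_0(s,\alpha)$ and $n_1$ (and still satisfying the two congruences) guarantees that all hypotheses of Lemma~\ref{fromgdd} are in force, and Lemma~\ref{fromgdd} then delivers the claimed value of $n_q(sn,T)$ with $T=\frac{\lambda s(n-1)}{\alpha-1}-\lambda$ and $q=\frac{sn}{\alpha}$, completing the argument.

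I do not anticipate a genuine obstacle here, since the two nontrivial ingredients are already black-boxed: the hard combinatorial content lives entirely in Theorem~\ref{rgdd} (the asymptotic RGDD existence theorem of Chan and Dukes) and in the lower-bound/packing correspondence underlying Lemma~\ref{fromgdd}. The only point demanding any care is bookkeeping, namely checking that the congruence hypotheses of Theorem~\ref{asymprgdd} translate verbatim into the divisibility conditions $\lambda g(u-1)\equiv 0\pmod{k-1}$ and $gu\equiv 0\pmod{k}$ of Theorem~\ref{rgdd} under the identification $g=s$, $k=\alpha$, $u=n$. If I had to flag a step as most delicate, it is simply confirming that ``sufficiently large'' can be chosen uniformly to satisfy the RGDD threshold, the growth inequality, and the periodicity of the admissible residues simultaneously; but since each constraint is eventually satisfied and the admissible $n$ form an arithmetic progression, their intersection is nonempty and cofinite in that progression, so no real difficulty arises.
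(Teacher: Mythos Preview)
Your proposal is correct and follows essentially the same approach as the paper: verify that the side inequality $2\lambda(s-1)<sn-\alpha$ holds for large $n$, then invoke Theorem~\ref{rgdd} (with $g=s$, $k=\alpha$, $u=n$) to supply the $(\alpha,\lambda)$-RGDD of type $s^n$ and apply Lemma~\ref{fromgdd}. The paper's proof is simply a terser version of what you wrote.
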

\begin{proof}For fixed $s,\alpha$ and $\lambda$, we have $2\lambda(s-1)<sn-\alpha$ for sufficiently large $n$.
Hence the conclusion follows by Lemma~\ref{fromgdd}  and the
asymptotic existence of $(\alpha,\lambda)$-RGDD of type $s^n$ in
Theorem~\ref{rgdd}.
\end{proof}


\begin{lemma}
\label{delonecol}If there exists an RBIBD$(\alpha q,\alpha,\lambda)$
with $q\geq 3$, then $n_{q}(\alpha q, T)=2\lambda(\alpha
q-1)/(\alpha-1)-2$, where $T=\frac{\lambda\alpha (q-1)}{\alpha
-1}-1$.
\end{lemma}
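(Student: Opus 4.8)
The plan is to prove the two matching bounds separately: the upper bound by deleting a single parallel class from the given RBIBD and then invoking Lemma~\ref{const}, and the lower bound by a direct evaluation of the $GBT$ bound in Theorem~\ref{lbound}. The name of the lemma (\emph{delete one column}) already signals the construction: one parallel class of the RBIBD corresponds to one column of the associated equidistant matrix, and removing it trims the length by $2$ and the asymmetric distance by $1$, which is exactly the gap between Lemma~\ref{fromgdd} and the present statement.

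For the upper bound I would first record the standard RBIBD parameters. An RBIBD$(\alpha q,\alpha,\lambda)$ has replication number, and hence number of parallel classes, equal to $r=\lambda(\alpha q-1)/(\alpha-1)$, and each parallel class consists of $\alpha q/\alpha=q$ blocks. Since a BIBD is in particular an $(\alpha q,\alpha,\lambda)$-packing, deleting one parallel class leaves a resolvable $(\alpha q,\alpha,\lambda)$-packing with $r-1$ parallel classes of $q$ blocks each (pair coverage can only drop, so the packing index is still $\lambda$). Applying Lemma~\ref{const} to this reduced packing produces a $q$-ary $t$-EC-AUED code of size $\alpha q$ and length $2(r-1)$ with $t=(r-1)-\lambda-1$. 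A short simplification gives $T=t+1=r-\lambda-1=\lambda\alpha(q-1)/(\alpha-1)-1$ and $2(r-1)=2\lambda(\alpha q-1)/(\alpha-1)-2$, which is exactly the target, yielding $n_q(\alpha q,T)\leq 2\lambda(\alpha q-1)/(\alpha-1)-2$. This half needs only $q\geq2$, so that each deleted parallel class contains at least two blocks.

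For the lower bound I would substitute $a=\alpha q$ into $GBT_q(a,T)$. Then $\alpha=\lfloor a/q\rfloor=\alpha$ and the term $a-\alpha q$ vanishes, so the denominator collapses to $a(a-\alpha)=\alpha^2 q(q-1)$, giving $GBT_q(\alpha q,T)=\lceil 2(\alpha q-1)T/(\alpha(q-1))\rceil$. Plugging in $T=\lambda\alpha(q-1)/(\alpha-1)-1$ and simplifying, the argument of the ceiling becomes $2\lambda(\alpha q-1)/(\alpha-1)-2-2(\alpha-1)/(\alpha(q-1))$, that is, the desired integer minus a correction term $2(\alpha-1)/(\alpha(q-1))$.

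The main obstacle, and the only place the hypothesis $q\geq3$ is genuinely used, is showing that this correction term lies strictly in $(0,1)$, so the ceiling rounds back up to precisely $2\lambda(\alpha q-1)/(\alpha-1)-2$. Positivity is immediate from $\alpha\geq2$; the upper estimate $2(\alpha-1)<\alpha(q-1)$ reduces to $\alpha q>3\alpha-2$, which holds exactly when $q\geq3$ and fails at $q=2$ (where the correction reaches $1$). Once this fractional analysis is settled, Theorem~\ref{lbound} reads $n_q(\alpha q,T)\geq 2\lambda(\alpha q-1)/(\alpha-1)-2$, and combining it with the upper bound completes the proof.
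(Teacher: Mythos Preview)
Your proposal is correct and follows exactly the paper's approach: delete one parallel class from the RBIBD to obtain a resolvable $(\alpha q,\alpha,\lambda)$-packing with $r-1$ classes and invoke Lemma~\ref{const} for the upper bound, with the lower bound coming from Theorem~\ref{lbound}. You have in fact supplied more detail than the paper, which leaves the $GBT$ computation (and in particular the role of the hypothesis $q\geq 3$ in bounding the fractional correction $2(\alpha-1)/(\alpha(q-1))$) entirely to the reader.
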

\begin{proof}Delete one parallel class from the RBIBD$(\alpha
q,\alpha,\lambda)$ to get a resolvable $(\alpha
q,\alpha,\lambda)$-packing. Then apply Lemma~\ref{const}.
\end{proof}

\begin{theorem}
\label{asymp}Given positive integers $\lambda\geq1$ and $\alpha\geq 2$, $n_{q}(\alpha q, T)=2\lambda(\alpha q-1)/(\alpha-1)-2$
for all sufficiently large $q$ satisfying $\lambda(\alpha q-1)\equiv
0 \pmod{\alpha-1}$, where $T=\frac{\lambda\alpha (q-1)}{\alpha -1}-1$.
\end{theorem}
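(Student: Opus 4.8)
The plan is to combine the previous lemma, Lemma~\ref{delonecol}, with the asymptotic existence theory for resolvable designs, exactly in the manner that Theorem~\ref{asymprgdd} was derived from Lemma~\ref{fromgdd} and Theorem~\ref{rgdd}. The target formula for $n_q(\alpha q, T)$ and the value of $T$ match Lemma~\ref{delonecol} verbatim, so the only work is to supply, for all sufficiently large $q$ satisfying the stated congruence, an RBIBD$(\alpha q, \alpha, \lambda)$ whose existence Lemma~\ref{delonecol} assumes.

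First I would observe that an RBIBD$(\alpha q, \alpha, \lambda)$ is precisely a $(\alpha,\lambda)$-RGDD of type $1^{\alpha q}$, that is, the case $g=1$ and $u=\alpha q$ of Theorem~\ref{rgdd}. Applying that theorem with $g=1$ and $k=\alpha$, there is an integer $u_0(1,\alpha)$ such that for all $u \geq u_0$ a $(\alpha,\lambda)$-RGDD of type $1^u$ exists if and only if $\lambda(u-1)\equiv 0 \pmod{\alpha-1}$ and $u \equiv 0 \pmod{\alpha}$. Setting $u = \alpha q$, the divisibility condition $u \equiv 0 \pmod{\alpha}$ holds automatically, and the remaining condition $\lambda(\alpha q - 1) \equiv 0 \pmod{\alpha-1}$ is exactly the hypothesis imposed in the statement. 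Hence for all $q$ large enough that $\alpha q \geq u_0(1,\alpha)$, the required RBIBD$(\alpha q,\alpha,\lambda)$ exists.

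Next I would check the side condition $q \geq 3$ required by Lemma~\ref{delonecol}; this is subsumed by ``sufficiently large $q$.'' With the design in hand, Lemma~\ref{delonecol} immediately yields $n_q(\alpha q, T) = 2\lambda(\alpha q-1)/(\alpha-1) - 2$ with $T = \lambda\alpha(q-1)/(\alpha-1) - 1$, which is the desired equality.

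I do not anticipate a genuine obstacle here, since the statement is essentially the asymptotic packaging of Lemma~\ref{delonecol}, just as Theorem~\ref{asymprgdd} packages Lemma~\ref{fromgdd}. The one point deserving care is the bookkeeping that the congruence hypothesis in the theorem matches the existence condition of Theorem~\ref{rgdd} under the substitution $g=1$, $u=\alpha q$; in particular one must note that $\alpha q \equiv 0 \pmod{\alpha}$ is trivially satisfied, so only the single congruence $\lambda(\alpha q-1)\equiv 0\pmod{\alpha-1}$ survives as a real constraint. Assembling these pieces gives the result.
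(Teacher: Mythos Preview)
Your proposal is correct and follows exactly the same route as the paper's own proof, which is a one-line appeal to the asymptotic existence of RBIBD$(\alpha q,\alpha,\lambda)$ via Theorem~\ref{rgdd} (with $g=1$) feeding into Lemma~\ref{delonecol}. You have simply spelled out the verification of the congruence conditions that the paper leaves implicit.
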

\begin{proof}By the asymptotic existence RBIBD$(\alpha
q,\alpha,\lambda)$ in Theorem~\ref{rgdd}.
\end{proof}

In Table \ref{t1}, we give some examples of exact values of $n_q(a,T)$
determined by Lemmas~\ref{fromgdd}, \ref{delonecol} and the extension of Lemma~\ref{deleter}. The existence of combinatorial
objects used in this table can be found in \cite{GeMiao:HCD}.

\begin{table*}
\center\caption{Values of $n_q(a,T)$ from Lemmas~\ref{fromgdd} and \ref{delonecol}}\label{t1}
\[\begin{array}{c|c|c|c|c|c}
\hline
k & q & a & T & n_q(a,T) &  \text{Apply Lemma~\ref{fromgdd} with}\\
\hline \geq 1& 2k+1&[4k+1, 6k+3] & 3k& 6k+2 &
\text{RBIBD$(6k+3,3,1)$}\\
\hline \geq 3& 2k&[4k-1, 6k] & 3k-2& 6k-2 &
\text{$(3,1)$-RGDD of type $2^{3k}$}\\
\hline \geq 1& 3k+1 &[6k+1, 12k+4] & 4k& 8k+2 &
\text{RBIBD$(12k+4,4,1)$}\\
\hline \geq 2& 3k &[6k-1, 12k] & 4k-2& 8k-2 &
\text{$(4,1)$-RGDD of type $3^{4k}$}\\
\hline \geq 58& 3k+2 &[6k+3, 12k+8] & 4k+1& 8k+4 &
\text{$(4,1)$-RGDD of type $2^{6k+4}$}\\
 \hline
 \hline
 k & q & a & T & n_q(a,T) &  \text{Apply Lemma~\ref{delonecol} with}\\
 \hline \geq 2& 2k+1&[4k+1, 6k+3] & 3k-1& 6k+2 &
\text{RBIBD$(6k+3,3,1)$}\\
\hline \geq 2& 3k+1 &[6k+1, 12k+4] & 4k-1& 8k+2 &
\text{RBIBD$(12k+4,4,1)$}\\
\hline
\end{array}\]
\end{table*}

Before closing this section, we note that for two
binary vectors $\vx$ and $\vy$ of equal number of $1$'s, we have that $N(\vx, \vy) = N(\vy, \vx) =\frac{1}{2}d_H(\vx, \vy)$. So a binary
constant weight code with minimum Hamming distance $d$ is a $(\frac{d}{2}-1)$-EC AUED code. It's well known that the rows of the incidence matrix of a BIBD form a binary constant weight code. In Table \ref{t2}, we give two examples of equivalent objects for optimal
binary EC AUED codes. However, the
existence of corresponding BIBDs used in Table II
is very rare by referring to \cite{IoninTrung:HCD}.

\begin{table*}
\center\caption{Equivalent Objects for Binary Codes}\label{t2}
\[\begin{array}{c|c|c|c}
\hline
a & T & n_2(a,T)&  \text{Optimal codes equivalent to} \\
\hline 4k+3 & k+1& 4k+3 &
\text{BIBD$(4k+3,2k+1,k)$'s}\\
\hline 2k & k& 4k-2 &
\text{BIBD$(2k,k,k-1)$'s}\\
 \hline
\end{array}\]
\end{table*}

\section{Optimal Ternary $t$-EC-AUED Codes}\label{ter}
In this section, we give some direct constructions of optimal
ternary $t$-EC-AUED codes up to size $12$. For some of the codes we
search directly by computer, but when the length becomes big, the
searching space will be huge. In this case,  we map each ternary
code to be a resolvable set system as in Section IV. Suppose there
is a ternary $t$-EC-AUED code of size $a$ and length $n$. Let the
rows be indexed by $X$ of size $a$, then for each column, we obtain
three blocks by collecting all the indices of rows with same
entries. Thus we get a resolvable set system of order $a$ and size
$3n$, where each parallel class has three blocks. Conversely, we can
get the corresponding ternary code from such a resolvable set
system. However, to
ensure that the code is a $t$-EC-AUED code, the set system must satisfy extra conditions, which are not easy to be characterized.

In the following constructions, if we construct a resolvable set
system instead of the ternary code, we list the three blocks in each
parallel class in order, for which the entries in the corresponding
rows will be assigned to  $0$, $1$ and $2$ respectively. If we list
the optimal code itself, we usually denote $C_T$ the optimal
$(T-1)$-EC-AUED code.

Further, in design theory, people usually equip the desired designs
with some group structures to reduce the search space.
What they do is try to find a partial result, which can be developed
to the complete desired design by using the group structure. For example,
 if a block $B$ is developed by $\bbZ_n$, then $B_i$, $i\in \bbZ_n$ are
  obtained such that $B_i=\{b+i: b\in B\}$. We
will apply this idea to some of our constructions.

\begin{lemma}
\label{q3t8a7}$n_{3}(7,8)=21$.
\end{lemma}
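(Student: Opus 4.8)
The plan is to sandwich $n_3(7,8)$ between the general bounds and then close the remaining gap with an explicit construction. First I would record that Theorem~\ref{lbound} gives $n_3(7,8)\ge GBT_3(7,8)$; with $\alpha=\lfloor 7/3\rfloor=2$ the denominator is $7\cdot 5-(7-6)\cdot 3=32$, so $GBT_3(7,8)=\lceil 672/32\rceil=21$, which is exactly the bound $\lceil 21T/8\rceil$ of Lemma~\ref{known}(ii) evaluated at $T=8$. The companion upper bound $\lceil 8T/3\rceil$ there yields only $n_3(7,8)\le 22$, so the entire task reduces to exhibiting a single ternary $7$-EC-AUED code of size $7$ and length exactly $21$, i.e. (by Theorem~\ref{aued}) seven ternary words of length $21$ with $N(\vx,\vy)\ge 8$ and $N(\vy,\vx)\ge 8$ for every ordered pair of distinct codewords.

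Before searching for such a code I would rule out the shortcuts used elsewhere in the paper, since this is exactly what locates the difficulty. Concatenation (Lemma~\ref{concatenation}) and the packing correspondence (Lemma~\ref{const}) both produce codes of even length $2n$, whereas $21$ is odd, so neither applies as stated. The recursion of Lemma~\ref{recur} also fails here: since $GBT_3(7,8)=21$ is attained with no rounding loss, any decomposition $8=T_1+T_2$ costs at least one unit to the ceilings — indeed $\lceil 21T_1/8\rceil+\lceil 21T_2/8\rceil=22>21$ for every such split — so the additivity hypothesis of Lemma~\ref{recur} is never satisfied. A genuinely direct construction of odd length is therefore unavoidable.

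For the construction I would look for a balanced cyclic code. Because the $GBT$ bound is of Plotkin type and is met with equality here, an optimal code should be as balanced as possible in every coordinate; with seven rows and three symbols this forces each column to realise the symbol multiplicities $(3,2,2)$, and it is natural to impose constant composition on each codeword (seven $0$'s, seven $1$'s, seven $2$'s). Under that ansatz, for any pair the $3\times 3$ confusion matrix $(n_{ab})$ has all row and column sums equal to $7$, and the requirement becomes $n_{10}+n_{20}+n_{21}\ge 8$ together with $n_{01}+n_{02}+n_{12}\ge 8$, equivalently at most five agreements, $n_{00}+n_{11}+n_{22}\le 5$, for every pair. I would then generate the seven codewords by developing a single base word cyclically over $\bbZ_7$, viewing the $21$ coordinates as three copies of $\bbZ_7$, and choose the base word so that its difference structure forces both inequalities simultaneously; equivalently, through the correspondence of Section~\ref{ter}, I would construct the associated resolvable set system of order $7$ with $21$ parallel classes of block-sizes $(3,2,2)$ and check the extra ordering conditions.

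The main obstacle is precisely this explicit construction. The tight integer bound leaves no slack, the standard concatenation and recursion machinery is inapplicable, and one must produce seven words meeting the at-most-five-agreements condition for all $\binom{7}{2}=21$ pairs — very likely by a short, possibly computer-assisted, search over cyclic base words. Once such seven words are in hand, the claim $n_3(7,8)=21$ follows immediately by verifying $N(\vx,\vy)\ge 8$ and $N(\vy,\vx)\ge 8$ for each pair and invoking Theorem~\ref{aued}, combined with the lower bound above.
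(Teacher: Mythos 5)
Your lower-bound computation ($GBT_3(7,8)=21$) and your diagnosis of why the paper's general machinery cannot close the gap (odd length rules out Lemma~\ref{concatenation} and Lemma~\ref{const}; the absence of slack in $\lceil 21T/8\rceil$ rules out Lemma~\ref{recur}) are correct, and your cyclic ansatz --- seven codewords obtained by developing base data over $\bbZ_7$, with the $21$ coordinates read as three copies of $\bbZ_7$ --- is exactly the shape of the paper's construction: there the code is the row set of $M=(A|B|C)$ with $A,B,C$ circulant, obtained by developing the three base parallel classes $\{0,1,2\},\{3,6\},\{4,5\}$; $\{0,2,4\},\{5,6\},\{1,3\}$; $\{0,1,4\},\{3,5\},\{2,6\}$ over $\bbZ_7$ and then checking the pair conditions. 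But your proposal stops exactly where the content of the lemma begins: no code is exhibited, and ``very likely by a short, possibly computer-assisted, search'' is not a proof of the upper bound $n_3(7,8)\le 21$. The explicit object \emph{is} the lemma; without it you have only established $21\le n_3(7,8)\le 22$.

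Two steps of your reduction are also flawed, and would have misdirected the search. First, the claimed equivalence between ``$N(\vx,\vy)\ge 8$ and $N(\vy,\vx)\ge 8$'' and ``at most five agreements'' is false for ternary constant-composition words. Equal row and column sums of the confusion matrix give only the weighted identity $(n_{10}-n_{01})+2(n_{20}-n_{02})+(n_{21}-n_{12})=0$, not $N(\vx,\vy)=N(\vy,\vx)$: the matrix with $n_{00}=2$, $n_{01}=2$, $n_{02}=3$, $n_{10}=4$, $n_{11}=1$, $n_{12}=2$, $n_{20}=1$, $n_{21}=4$, $n_{22}=2$ has all row and column sums $7$ and exactly five agreements, yet $N(\vx,\vy)=9$ and $N(\vy,\vx)=7$, violating the requirement. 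So even a code found under your ``at most five agreements'' criterion would still need both directed distances verified pair by pair. Second, your ansatz that every codeword have composition $(7,7,7)$ actually \emph{excludes} the paper's solution: in that code every column assigns the symbol $0$ to the block of size three, so every codeword has composition $(9,6,6)$. The Plotkin-type balance heuristic constrains the columns, not the rows, and imposing row balance risks searching an empty class.
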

\begin{proof}Let $X=\bbZ_7$. The following nine blocks form three
parallel classes, where each row is a parallel class. Develop them to $21$  parallel classes
by $\bbZ_7$ and keep the order of blocks in each parallel class.
Then one can check this resolvable set system gives a ternary
$7$-EC-AUED code of size $7$ and length $21$, which is optimal by the fact that
$GBT_{3}(7,8)=21$. In fact, the codewords are the rows of the $7 \times 21$ matrix $M = (A|B|C)$ where
$A$, $B$ and $C$ are circulant matrices with rows indexed by $\bbZ_7$. The leftmost column of $A$ has zeroes in rows $0,1,2$, ones in rows $3
$ and $6$, and twos in rows $4$ and $5$; the leftmost column of $B$ has zeroes in rows $0,2,4$, ones in rows $5
$ and $6$, and twos in rows $1$ and $3$; and the leftmost column of $C$ has zeroes in rows $0,1,4$, ones in rows $3
$ and $5$, and twos in rows $2$ and $6$.
\[\begin{array}{lll}
    \{0,1,2\},&\{3,6\},&\{4,5\}\\
\{0,2,4\},&\{5,6\},&\{1,3\}\\
\{0,1,4\},&\{3,5\},&\{2,6\}
  \end{array}
\]
In fact, this resolvable set system can be found in \cite[Example
2.3]{DanzigerStevens:2001} as a {\em class-uniformly resolvable design}
with partition $2^23^1$.

\end{proof}

\begin{theorem}\label{q3a7}$n_{3}(7,T)=\left\lceil\frac{21}{8}T\right\rceil$ for all $T\geq 1$.
\end{theorem}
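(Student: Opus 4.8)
The plan is to reduce the claim, for every $T$, to a small finite set of optimal codes and then propagate with the concatenation recursion of Lemma~\ref{recur}. First I would evaluate the bound explicitly: for $a=7$, $q=3$ we have $\alpha=\lfloor 7/3\rfloor=2$, so the denominator in Theorem~\ref{lbound} is $7\cdot 5-(7-6)\cdot 3=32$ and the numerator is $84T$, giving $GBT_3(7,T)=\lceil 84T/32\rceil=\lceil 21T/8\rceil$. Theorem~\ref{lbound} then yields the lower bound $n_3(7,T)\geq\lceil 21T/8\rceil$, so only a matching upper bound is needed.

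The central observation is the periodicity $GBT_3(7,T+8)=\lceil 21T/8\rceil+21=GBT_3(7,T)+GBT_3(7,8)$. Since Lemma~\ref{q3t8a7} supplies $n_3(7,8)=21=GBT_3(7,8)$, Lemma~\ref{recur} with $T_2=8$ shows that $n_3(7,T)=GBT_3(7,T)$ implies $n_3(7,T+8)=GBT_3(7,T+8)$. Consequently the theorem follows by induction with step $8$ as soon as optimality is proved for the eight base values $T\in\{1,\dots,8\}$.

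I would then cut the base set down using Lemma~\ref{recur} again. The values are $GBT_3(7,T)=3,6,8,11,14,16,19,21$ for $T=1,\dots,8$, and one checks the additive identities $6=3+3$, $11=3+8$, $14=3+11$, $16=8+8$, $19=8+11$. Each realizes $GBT_3(7,T)$ as $GBT_3(7,T_1)+GBT_3(7,T_2)$ with both summands already optimal, so $T\in\{2,4,5,6,7\}$ follow by concatenation. The remaining residues $T\in\{1,3,8\}$ admit no such split (for $T=3$ the only partition $1+2$ gives $3+6=9>8$, and for $T=8$ every partition gives $22>21$), so these must be built directly. For $T=1$ the seven ternary words of coordinate sum $3$ form a maximum antichain in $\{0,1,2\}^3$ (de Bruijn--Tengbergen--Kruyswijk), hence a $0$-EC-AUED code of length $3$ and size $7$; for $T=8$ the optimal code is Lemma~\ref{q3t8a7}.

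The main obstacle is therefore the single case $T=3$: an optimal ternary $2$-EC-AUED code of size $7$ and length $8$. Concatenation cannot reach it, since combining the length-$3$ ($T=1$) and length-$6$ ($T=2$) codes yields length $9$; a genuinely shorter code of length $8$ is required. I would obtain this by a direct construction, most plausibly a computer search (optionally organized through the resolvable set system and parallel-class description of Section~\ref{ter} to cut the search space), exhibiting seven vectors in $\{0,1,2\}^8$ with $N(\vx,\vy)\geq 3$ and $N(\vy,\vx)\geq 3$ for every distinct pair, as demanded by Theorem~\ref{aued}. With $T=1,3,8$ in hand, Lemma~\ref{recur} fills in $T=2,4,5,6,7$, and the step-$8$ induction then delivers $n_3(7,T)=\lceil 21T/8\rceil$ for all $T\geq 1$.
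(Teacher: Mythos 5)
Your overall architecture is exactly the paper's: the same evaluation $GBT_3(7,T)=\left\lceil 21T/8\right\rceil$, the same use of Lemma~\ref{q3t8a7} to get $n_3(7,8)=21$, and the same step-$8$ propagation via Lemma~\ref{recur} (the paper writes it as $C_T=C_8|C_{T-8}$). The divergence is in the base cases $1\leq T\leq 7$: the paper simply cites \cite{NaydeTor2009} for $n_3(7,T)=GBT_3(7,T)$ in that range, while you attempt to rebuild them from two primitive constructions ($T=1$ and $T=3$) plus concatenation. Your bookkeeping there is correct: the identities $6=3+3$, $11=3+8$, $14=3+11$, $16=8+8$, $19=8+11$ do reduce everything to $T\in\{1,3,8\}$, and your $T=1$ code (the seven ternary words of length $3$ with coordinate sum $3$) is a valid antichain meeting the bound.

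The gap is $T=3$. You correctly identify that an optimal ternary $2$-EC-AUED code of size $7$ and length $8$ cannot be produced by concatenating your shorter codes, but you then only promise that ``a computer search'' would find one; no such code is exhibited and no existence argument is given, so as written the proof is incomplete at exactly the one case your reduction cannot bypass. The hole closes immediately with material already in the paper's preliminaries: Lemma~\ref{known}(ii) gives $n_3(7,T)\leq \left\lceil 8T/3\right\rceil$, and at $T=3$ this upper bound equals $8=\left\lceil 21\cdot 3/8\right\rceil$, so $n_3(7,3)=8$ with no search needed. (In fact $\left\lceil 8T/3\right\rceil=\left\lceil 21T/8\right\rceil$ for every $T\leq 7$, and $T=8$ is the first value where the two bounds differ, $22>21$ --- which is precisely why the paper can dispose of all base cases by citation and why Lemma~\ref{q3t8a7} is the only new construction required. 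Alternatively, Lemma~\ref{known}(iii) gives $n_3(9,3)=8$, and any $7$ of its codewords form the code you need.) With $T=3$ settled this way, your argument becomes a complete proof, essentially the paper's but with a more self-contained treatment of the small cases.
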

\begin{proof}For $a=7$, we have $GBT_3(7,T)=\left\lceil\frac{21}{8}T\right\rceil$. When $1\leq T\leq 7$,
$n_{3}(7,T)=GBT_3(7,T)$ is known by \cite{NaydeTor2009}. By
Lemma~\ref{q3t8a7}, $n_{3}(7,8)=GBT_3(7,8)$. Hence the construction
$C_T=C_8|C_{T-8}$ gives the optimal code of length
$\left\lceil\frac{21}{8}T\right\rceil$ for all $T$  by Lemma~\ref{recur}.
\end{proof}

In \cite{NaydeTor2009}, the authors stated that
$n_3(9,1)=4=GBT_3(9,1)+1$ and $n_3(9,4)=12=GBT_3(9,4)+1$ by computer
search. However, the latter is not true. In fact, we find a
 $3$-EC-AUED code of length $11$ and with bigger size $12$, which meets the
bound in Theorem~\ref{lbound}.

\begin{lemma}
\label{q3t4a12}$n_{3}(a,4)=11$ for $8\leq a\leq 12$.
\end{lemma}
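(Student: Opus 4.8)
The plan is to establish $n_3(a,4)=11$ for all $8\le a\le 12$ by proving the single hardest case, $a=12$, and then deducing the rest by monotonicity together with the lower bound. First I would compute the $GBT_3$ lower bound at these sizes: with $q=3$, $T=4$, and $\alpha=\lfloor a/3\rfloor$, Theorem~\ref{lbound} gives $n_3(a,4)\ge GBT_3(a,4)$. For $a=12$ we have $\alpha=4$, and a direct evaluation of $GBT_3(12,4)=\lceil 2\cdot 12\cdot 11\cdot 4/(12\cdot 8-0)\rceil=\lceil 1056/96\rceil=11$, so the lower bound is $11$. I would check that $GBT_3(a,4)=11$ holds throughout $8\le a\le 12$ as well (the $GBT$ function is nondecreasing in $a$, and by the plateau property noted after Theorem~\ref{lbound}, $GBT_3(3\mu+2,T)=GBT_3(3\mu+3,T)$, which already pins several of these values together). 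This reduces everything to exhibiting one code.

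Next I would produce an explicit optimal code of size $12$ and length $11$, i.e.\ a $3$-EC-AUED code, which by Theorem~\ref{aued} means a $12\times 11$ ternary array whose rows pairwise satisfy $N(\vx,\vy)\ge 4$ and $N(\vy,\vx)\ge 4$ (equivalently minimum asymmetric distance $4$). Following the methodology announced at the start of Section~\ref{ter}, the cleanest route is to present this code as a resolvable set system of order $12$ with $11$ parallel classes, each parallel class consisting of three blocks partitioning the $12$ points, together with an ordering of the three blocks in each class (assigning symbols $0,1,2$). As in Lemma~\ref{q3t8a7}, I would try to find a small seed of base blocks developed cyclically under a group such as $\bbZ_{11}$ acting on eleven of the points with one fixed point, or under $\bbZ_{12}$ or a product group, so that the whole design is generated from only one or two base parallel classes. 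This keeps the object compact enough to state and to verify.

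Once the size-$12$ code is in hand, the smaller sizes $8\le a\le 11$ follow immediately: deleting any $12-a$ rows from an optimal code of size $12$ leaves a code of size $a$ whose minimum asymmetric distance is still at least $4$ (deleting codewords never decreases pairwise distances), so $n_3(a,4)\le 11$; combined with $GBT_3(a,4)=11\le n_3(a,4)$ this gives equality. This is exactly the deletion principle underlying Lemma~\ref{deleter} and its noted extension, specialized to $T=4$.

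The main obstacle is the construction step itself: unlike the $N(\vx,\vy)=N(\vy,\vx)$ symmetry enjoyed by the one-factorization codes of Section~\ref{fac}, a generic ternary code has no built-in balance forcing $N(\vx,\vy)\ge 4$ in both directions simultaneously, and the $GBT$ bound is tight here only when each symbol is nearly equidistributed in every coordinate, so the design must be quite rigid. Finding a seed that develops into a valid $t$-EC-AUED code (rather than merely an equidistant or equitable code) requires the extra, hard-to-characterize conditions flagged in the opening of Section~\ref{ter}, so I expect the real work to be a computer search over cyclic seeds, after which the verification that every pair of the twelve rows meets the $N\ge 4$ condition in both directions is a finite, if tedious, check. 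Everything downstream—the lower bound and the passage to smaller $a$—is routine by comparison.
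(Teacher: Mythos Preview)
Your proposal is correct and follows essentially the same outline as the paper: compute $GBT_3(a,4)=11$ for $8\le a\le 12$, exhibit a size-$12$ ternary code of length $11$ with minimum asymmetric distance $4$, and then delete rows to handle $8\le a\le 11$. The only difference is cosmetic: the paper does not use any cyclic development for this particular lemma but simply lists an explicit $12\times 11$ matrix $C_4$ (found by direct search, containing for instance the all-ones word $11111111111$), whereas you hope to find the code via group-developed base blocks; either way the verification is a finite check over $\binom{12}{2}$ pairs.
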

\begin{proof} For $8\leq a\leq 12$, we have $GBT_{3}(a,4)=11$.  A
$3$-EC-AUED code of size $12$ and length $11$ is constructed below.
For $8\leq a\leq 11$, the optimal codes are obtained by collecting
any set of $a$ codewords from $C_4$.
\begin{align*}
C_4=\begin{bmatrix}10022021012\\
00202112021\\
20100211202\\
01020202211\\
12001122200\\
11111111111\\
21221100002\\
02112002102\\
12210201020\\
21012010220\\
22200020111\\
00121220120
\end{bmatrix}
\end{align*}\end{proof}

Hence by Lemmas~\ref{recur} and \ref{known}, we determine all values
of $n_{3}(8,T)$ and $n_{3}(9,T)$.
\begin{theorem}\label{q3a9}$n_{3}(8,T)=n_{3}(9,T)=\left\lceil\frac{8}{3}T\right\rceil$ for all $T>1$.
\end{theorem}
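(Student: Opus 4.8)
The plan is to determine $n_3(8,T)$ and $n_3(9,T)$ for all $T>1$ by combining the new anchor value $n_3(a,4)=11$ from Lemma~\ref{q3t4a12} with the previously known values in Lemma~\ref{known} and the concatenation recurrence in Lemma~\ref{recur}. First I would record that the $GBT$ target is $GBT_3(a,T)=\lceil 8T/3\rceil$ for $a\in\{8,9\}$, so the goal is exactly to show $n_3(a,T)=\lceil 8T/3\rceil$ for all $T>1$; the lower bound $n_3(a,T)\geq GBT_3(a,T)$ is automatic from Theorem~\ref{lbound}, so all the work is in the matching upper bound.

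The key observation is that $\lceil 8T/3\rceil$ behaves periodically modulo $3$: part (iii) of Lemma~\ref{known} already gives $n_3(a,T)=\lceil 8T/3\rceil$ whenever $T\not\equiv 1\pmod 3$, so only the residue class $T\equiv 1\pmod 3$ (with $T>1$, hence $T\geq 4$) remains open. For these I would use the anchor $n_3(a,4)=11=GBT_3(a,4)$ together with a splitting $T=4+(T-4)$, where $T-4\equiv 0\pmod 3$ falls into the already-settled case $T-4\not\equiv 1\pmod 3$. The main step is then to verify the hypothesis of Lemma~\ref{recur}, namely the additivity of the $GBT$ bound
\[
GBT_3(a,4)+GBT_3(a,T-4)=GBT_3(a,T),
\]
which for $T\equiv 1\pmod 3$ amounts to checking $\lceil 32/3\rceil+\lceil 8(T-4)/3\rceil=\lceil 8T/3\rceil$; since $8\cdot 4=32\equiv 2\pmod 3$ and $8(T-4)\equiv 0\pmod 3$, both sides evaluate cleanly (the left is $11+8(T-4)/3$) and the identity holds. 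With both anchors optimal and the bound additive, Lemma~\ref{recur} yields $n_3(a,T)=\lceil 8T/3\rceil$ for all $T\equiv 1\pmod 3$ with $T\geq 4$, completing every residue class.

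I do not expect a serious obstacle here, since the hard combinatorial work — producing the explicit optimal code of size $12$ and length $11$ — has already been done in Lemma~\ref{q3t4a12}, and the rest is bookkeeping on ceiling functions modulo $3$. The one point requiring care is the boundary: I must exclude $T=1$ (where $n_3(9,1)=4=GBT_3(9,1)+1$ fails to meet the bound, as noted before Lemma~\ref{q3t4a12}), which is exactly why the statement restricts to $T>1$; and I should confirm that the codes of size $8$ are obtained by deleting codewords from the size-$12$ code (as stated in Lemma~\ref{q3t4a12}), so that the anchor at $T=4$ genuinely applies to both $a=8$ and $a=9$ before the recurrence is invoked.
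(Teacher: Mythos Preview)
Your proposal is correct and follows essentially the same route as the paper: the paper's proof is a one-line appeal to Lemmas~\ref{recur} and~\ref{known} together with the anchor $n_3(a,4)=11$ from Lemma~\ref{q3t4a12}, and you have spelled out precisely how these pieces fit together, including the residue-class bookkeeping and the additivity check $11+8(T-4)/3=(8T+1)/3$ for $T\equiv 1\pmod 3$. The only cosmetic point is that your splitting $T=4+(T-4)$ should be invoked for $T\geq 7$ (so that $T-4\geq 3$ is a genuine instance of Lemma~\ref{known}), with $T=4$ handled directly by the anchor itself; this is implicit in what you wrote but worth stating explicitly.
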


Next, we  study values of $n_3(a,T)$ for $a\in \{10,11,12\}$.
By Lemma~\ref{deleter}, it is enough to consider the cases
$a=10,12$, for which we have  $GBT_3(10,T)= \left\lceil
\frac{30T}{11}\right\rceil$ and $GBT_3(12,T)= \left\lceil \frac{11T}{4}\right\rceil$.

\begin{lemma}
\label{q3t2a16}$n_{3}(a,2)=6$ for $7\leq a\leq 16$ and
$n_{3}(a,3)=9$ for $10\leq a\leq 25$.
\end{lemma}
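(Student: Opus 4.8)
The plan is to prove the lemma in two halves, each split into a matching lower and upper bound, and then to exploit the fact that deleting codewords never decreases the minimum asymmetric distance. Concretely, it suffices to (a) verify the lower bounds $n_3(a,2)\geq 6$ and $n_3(a,3)\geq 9$ over the stated ranges, (b) exhibit one ternary $1$-EC-AUED code of size $16$ and length $6$ together with one ternary $2$-EC-AUED code of size $25$ and length $9$, and (c) observe that any subset of such a code is again a code of at least the same minimum asymmetric distance, so that every smaller size is realised with no increase in length. Thus a single size-$16$, length-$6$ code settles $7\leq a\leq 16$, and a single size-$25$, length-$9$ code settles $10\leq a\leq 25$.

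For the lower bounds I would feed the parameters into Theorem~\ref{lbound}. A short computation shows $GBT_3(a,2)=6$ for every $a$ with $7\leq a\leq 16$ (here $\alpha=\lfloor a/3\rfloor$ runs through $2,3,4,5$) and $GBT_3(a,3)=9$ for every $a$ with $10\leq a\leq 25$; in both cases the ceiling stays constant across the range, which is precisely why these are the natural ranges to treat as a block. Since $n_3(a,T)\geq GBT_3(a,T)$, this yields $n_3(a,2)\geq 6$ and $n_3(a,3)\geq 9$ immediately, and the real content of the lemma lies in the two constructions.

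For the upper bounds the goal is a pair of explicit codes. By Theorem~\ref{aued} I need $16$ words of length $6$ with $N(\vx,\vy)\geq 2$ and $N(\vy,\vx)\geq 2$ for every distinct pair, and $25$ words of length $9$ with both quantities at least $3$. I would first note that the concatenation of Lemma~\ref{concatenation} does not suffice: a length-$3$ ternary code of Hamming distance $\geq 2$ has size at most $q^2=9<16$ by the Singleton bound, and it can only produce even lengths, so it cannot reach length $9$ at all. Hence both codes must be obtained by direct search and presented as explicit $16\times 6$ and $25\times 9$ arrays in the style of the matrix $C_4$ in Lemma~\ref{q3t4a12}. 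For the larger code I would cut the search space by imposing an automorphism (a cyclic development in the spirit of Lemma~\ref{q3t8a7}, or a prescribed group acting on the codeword set), and then certify the pairwise $N$-conditions either from the orbit structure or by a final exhaustive check.

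The main obstacle is the size-$25$, length-$9$ construction. For length $6$ the ambient space has only $3^6=729$ words, and a largest code of minimum asymmetric distance $2$ can be located by a routine clique search; that this maximum equals $16$ (rather than something larger, even though $GBT_3(a,2)=6$ persists for $a>16$) is exactly why the range stops at $a=16$. For length $9$, however, the space has $3^9=19683$ words, and pinning down $25$ of them that are pairwise at asymmetric distance $\geq 3$ is the delicate step, since neither the concatenation of Lemma~\ref{concatenation} nor the resolvable-packing shortcut of Lemma~\ref{const} is available at odd length. The crux is therefore to choose a symmetry that renders this search feasible and then to verify the resulting array is a genuine $2$-EC-AUED code.
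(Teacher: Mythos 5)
Your overall strategy is exactly the paper's: lower bounds from Theorem~\ref{lbound} (your $GBT$ computations are correct, $GBT_3(a,2)=6$ on $7\leq a\leq 16$ and $GBT_3(a,3)=9$ on $10\leq a\leq 25$), smaller sizes by deleting codewords, and the entire burden resting on exhibiting a size-$16$, length-$6$ ternary code of minimum asymmetric distance $2$ and a size-$25$, length-$9$ code of minimum asymmetric distance $3$. Your observation that Lemma~\ref{concatenation} cannot produce either code (the Singleton bound caps a length-$3$ distance-$2$ ternary code at $9<16$ words, and concatenation yields only even lengths) is correct and a sound justification for turning to direct construction.

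However, the proposal stops at the decisive point: the existence of those two codes \emph{is} the content of the lemma, and you never establish it. You only describe a search that might find them (a clique search for length $6$, a prescribed automorphism for length $9$) and you explicitly flag the size-$25$ construction as an unresolved ``crux.'' A proof cannot defer its only nontrivial claim to a search that has not been carried out; as written, nothing in your argument excludes the possibility that the maximum size at length $9$ and asymmetric distance $3$ is smaller than $25$, in which case $n_3(25,3)>9$. The paper closes this gap by printing the two arrays outright --- $C_2$, a $16\times 6$ matrix, and $C_3$, a $25\times 9$ matrix --- whose pairwise $N$-values can be verified mechanically; notably, $C_3$ is presented with no stated group structure, so the symmetry shortcut you pin your hopes on is not what the paper exhibits. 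A further caution: your parenthetical claim that $16$ is the \emph{maximum} size at length $6$ (offered as the reason the range stops at $a=16$) is both unproved and unnecessary --- indeed $GBT_3(17,2)=6$ as well, so the stated ranges reflect the sizes of the codes actually constructed, not a jump in the lower bound.
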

\begin{proof} For $7\leq a\leq 16$, we have $GBT_{3}(a,2)=6$, while for $10\leq a\leq 25$, we have $GBT_{3}(a,3)=9$.
An optimal $1$-EC-AUED code of size $16$ and a $2$-EC-AUED code of
size $25$ are listed below. Optimal codes with smaller sizes can be
obtained by deleting some codewords from $C_2$ and $C_3$
respectively.
\begin{align*}
C_2=\begin{bmatrix}
211002\\
202011\\
201120\\
100221\\
112020\\
120012\\
010122\\
021021\\
022110\\
012201\\
102102\\
111111\\
121200\\
210210\\
220101\\
001212
\end{bmatrix},\
&C_3=\begin{bmatrix}
010220211\\
211200102\\
220021101\\
102020112\\
122100201\\
111111111\\
110012202\\
001102212\\
022011210\\
012121002\\
200211012\\
221120010\\
202112100\\
021212001\\
120202110\\
100122021\\
002201121\\
011022120\\
020110122\\
112210020\\
121001022\\
201010221\\
210101220\\
212002011\\
101221200
\end{bmatrix}.
\end{align*}
\end{proof}

By now, we have determined $n_{3}(12,T)$ for $2\leq T\leq 4$. Since $n_3(9,1)=4$, we have $n_{3}(12,1)\geq 4$ which is bigger than
$GBT_{3}(12,1)=3$. By
Lemma~\ref{recur}, we still need to determine $n_{3}(12,5)$ to
construct all optimal codes of size $12$.

\begin{lemma}
\label{q3t5a12}$n_{3}(a,5)=14$ for $10\leq a\leq 12$.
\end{lemma}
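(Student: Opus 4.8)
The plan is to establish the claim $n_3(a,5)=14$ for $10 \leq a \leq 12$ in two parts: a lower bound from the general theory and a matching upper bound via explicit construction. First I would compute the Gilbert-Varshamov-type bound $GBT_3(a,5)$ for the relevant sizes. For $a=12$ we have $\alpha = \lfloor 12/3 \rfloor = 4$, and plugging into the formula of Theorem~\ref{lbound} gives $GBT_3(12,5) = \lceil 55/4 \rceil = 14$. One should check that $GBT_3(a,5) = 14$ holds for all $a \in \{10,11,12\}$ as well, so that Theorem~\ref{lbound} already yields the lower bound $n_3(a,5) \geq 14$ throughout the stated range. This reduces the problem to exhibiting a single optimal code of the largest size, namely a ternary $4$-EC-AUED code of size $12$ and length $14$.

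For the upper bound, the strategy mirrors the earlier lemmas (Lemma~\ref{q3t4a12}, Lemma~\ref{q3t2a16}): produce an explicit $12 \times 14$ ternary matrix $C_5$ whose rows form a $4$-EC-AUED code, and obtain the codes of sizes $10$ and $11$ by deleting codewords. By Theorem~\ref{aued}, verifying the $4$-EC-AUED property amounts to checking that every ordered pair of distinct codewords $\vx, \vy$ satisfies $N(\vx,\vy) \geq 5$ and $N(\vy,\vx) \geq 5$. I would search for such a matrix by computer, ideally seeking one with a convenient group structure (for instance developing a partial solution over $\bbZ_7$ or over a smaller cyclic group, as in Lemma~\ref{q3t8a7}) so that the verification reduces to a small number of orbit representatives rather than all $\binom{12}{2} = 66$ pairs. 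Alternatively, one might try to assemble $C_5$ by concatenation, writing $C_5 = C_2 | C_3$ using the already-constructed optimal codes of sizes $16$ and $25$ from Lemma~\ref{q3t2a16}; however, since $6 + 9 = 15 > 14$, a naive concatenation overshoots the target length, so a direct search appears necessary to hit the exact optimal length $14$.

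The main obstacle is the construction of the length-$14$ code itself. Because the parameters sit exactly on the $GBT$ bound, the code must be essentially optimal in its symbol distribution---each symbol should occur nearly equally often in every coordinate, reflecting the Plotkin-type tightness discussed after Lemma~\ref{recur}. This rigidity makes the feasible region small, so an unstructured random search may fail; imposing a cyclic or group-developed structure both shrinks the search space and makes the pairwise distance verification tractable. Once the matrix $C_5$ is in hand, confirming the code parameters is routine: one checks $N(\vx,\vy) \geq 5$ and $N(\vy,\vx) \geq 5$ for all pairs (or orbit representatives), and the deletions for $a = 10, 11$ are immediate.

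Having obtained $n_3(12,5) = 14$, together with the previously determined values $n_3(12,T)$ for $2 \leq T \leq 4$ and the base length $n_3(12,1)$, one would then invoke Lemma~\ref{recur} to generate all remaining values of $n_3(12,T)$ by concatenation, completing the determination of $n_3(a,T)$ for $a \leq 12$.
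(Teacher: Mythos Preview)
Your proposal is correct and matches the paper's approach: compute $GBT_3(a,5)=14$ for $10\le a\le 12$ to get the lower bound, then exhibit an explicit $12\times 14$ ternary code $C_5$ with asymmetric distance $5$ (found by computer search, without any imposed group structure) and delete rows for the smaller sizes. Your discussion of search heuristics and of the subsequent recursion via Lemma~\ref{recur} is sound but belongs to the surrounding narrative rather than to the proof of this lemma itself.
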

\begin{proof}For $10\leq a\leq 12$, we have $GBT_{3}(a,5)=14$. An
optimal code of size $12$ is given below.
\begin{align*}
C_5=\begin{bmatrix}01212201221000\\
22002120022001\\
01000222102220\\
10201101112211\\
11021100201122\\
21112010101211\\
20022021210110\\
22220002000202\\
12120111120020\\
10110222011012\\
02211210010121\\
00101012222102
\end{bmatrix}
\end{align*}
\end{proof}

Hence by Lemmas~\ref{recur} and \ref{known}, we determine all values
of $n_{3}(12,T)$.
\begin{theorem}
\label{q3a12}$n_{3}(11,T)=n_{3}(12,T)=\left\lceil \frac{11T}{4}\right\rceil$
for all $T>1$.
\end{theorem}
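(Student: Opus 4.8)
The final statement to prove is Theorem~\ref{q3a12}, which asserts that $n_3(11,T)=n_3(12,T)=\lceil 11T/4\rceil$ for all $T>1$. The plan is to assemble this from the pieces established just before it, exactly as the companion results (Theorems~\ref{q3a7} and~\ref{q3a9}) were assembled. First I would identify the $GBT$ value: for $a=12$ one computes $\alpha=\lfloor 12/3\rfloor=4$ and substitutes into the formula of Theorem~\ref{lbound} to get $GBT_3(12,T)=\lceil 11T/4\rceil$, as already recorded in the text. By the extension of Lemma~\ref{deleter} (using that $GBT_3(11,T)=GBT_3(12,T)$, which follows from the $q\mu+(q-1)$ versus $q\mu+q$ invariance of $GBT$ with $q=3$, $\mu=3$), it suffices to handle $a=12$ alone and the $a=11$ case comes for free.

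The heart of the argument is a \emph{base-case plus concatenation} scheme driven by Lemma~\ref{recur}. The period of the ceiling function $\lceil 11T/4\rceil$ is $4$, so I would aim to establish optimality for a set of residues modulo $4$ that generates all $T>1$ additively. Concretely, I would invoke the already-proved base cases: $n_3(12,T)=GBT_3(12,T)$ for $T\in\{2,3,4\}$ from Lemmas~\ref{q3t2a16} and~\ref{q3t4a12}, and $n_3(12,5)=GBT_3(12,5)=14$ from Lemma~\ref{q3t5a12}. The key additivity check for Lemma~\ref{recur} is whether $GBT_3(12,T_1)+GBT_3(12,T_2)=GBT_3(12,T_1+T_2)$; since the $GBT$ here is $\lceil 11T/4\rceil$, this holds precisely when the fractional parts of $11T_1/4$ and $11T_2/4$ do not force an extra ceiling increment, equivalently when the residues of $11T_1$ and $11T_2$ modulo $4$ add without crossing a multiple of $4$ in a way that breaks the sum. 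I would verify this congruence bookkeeping for the chosen generating residues and then build every $T>5$ by concatenating a base code $C_{T_2}$ with $T_2\in\{2,3,4,5\}$ onto a code of the complementary length, inducting on $T$.

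The main obstacle I anticipate is purely the modular arithmetic of the ceiling function: one must choose the decomposition $T=T_1+T_2$ so that $GBT$ is exactly additive, and not every split works because $\lceil 11T/4\rceil$ is superadditive in general. The clean way to handle this is to peel off blocks of size $4$ (for which $GBT_3(12,4)=11$ is an integer and hence perfectly additive, adding exactly $11$ to the length each time) until the remaining index lies in $\{2,3,4,5\}$, where a concrete optimal code already exists. Because $4$ is the period, every $T>1$ reduces modulo $4$ into this base window, and the $+11$ increments align exactly with the jump of $\lceil 11T/4\rceil$ over each period. This makes the induction go through for all $T>1$ with no residue left uncovered, and the statement follows. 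Finally I would remark that the lower bound throughout is just Theorem~\ref{lbound}, so only the matching upper bounds, supplied by the listed codes and the concatenation of Lemma~\ref{concatenation}/Lemma~\ref{recur}, need verification.
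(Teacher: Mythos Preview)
Your proposal is correct and follows essentially the same approach as the paper: establish the base cases $T\in\{2,3,4,5\}$ from Lemmas~\ref{q3t2a16}, \ref{q3t4a12}, \ref{q3t5a12}, then extend to all $T>1$ via the concatenation recursion of Lemma~\ref{recur}, with the $a=11$ case inherited from $a=12$ by Lemma~\ref{deleter}. Your explicit choice to always peel off $T_2=4$ (so that $GBT_3(12,4)=11$ is an integer and additivity of the ceiling is automatic) makes the induction cleaner than the paper's terse ``by Lemmas~\ref{recur} and~\ref{known}'', but the underlying argument is the same.
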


By simple computation, we know that $GBT_3(10,T)=GBT_3(12,T)$ for
most integers $T$. The smallest integer $T$ with
$GBT_3(10,T)<GBT_3(12,T)$ is $11$.

\begin{lemma}
\label{q3t11a10}$n_{3}(10,11)=30$.
\end{lemma}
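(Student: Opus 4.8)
The plan is to establish $n_{3}(10,11)=30$ by proving the matching lower and upper bounds separately. First I would compute the lower bound via Theorem~\ref{lbound}: with $q=3$, $a=10$, and $\alpha=\lfloor 10/3\rfloor=3$, the denominator in $GBT_{3}(10,11)$ evaluates to $a(a-\alpha)-(a-\alpha q)(\alpha+1)=10\cdot 7-(10-9)\cdot 4=70-4=66$, so $GBT_{3}(10,11)=\lceil 2\cdot 10\cdot 9\cdot 11/66\rceil=\lceil 1980/66\rceil=\lceil 30\rceil=30$. Since the preceding remark notes that $T=11$ is precisely the smallest value where $GBT_3(10,T)<GBT_3(12,T)$, the size-$12$ construction cannot simply be restricted to give this code, which is why a separate construction at size $10$ is needed. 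Thus $n_{3}(10,11)\geq 30$, and it remains to exhibit a ternary $10$-EC-AUED code of size $10$ and length $30$.

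For the upper bound I would follow the methodology laid out at the start of Section~\ref{ter}: rather than searching directly for a $10\times 30$ ternary array (a prohibitively large space), I would search for the equivalent resolvable set system of order $10$ with $30$ parallel classes, each consisting of three blocks partitioning $X=[0,9]$ or $X=\bbZ_{10}$. To make this tractable I would impose a group structure, as the paragraph on developing blocks by $\bbZ_n$ suggests: I expect the construction to be presented as a small set of base parallel classes developed by a cyclic (or cyclic-plus-fixed-point) action to generate all $30$ classes. The required asymmetric-distance condition is that every pair of rows $\vx,\vy$ satisfies $N(\vx,\vy)\geq 11$ and $N(\vy,\vx)\geq 11$ (Theorem~\ref{aued}); translated to the set system, for each ordered pair of points the number of parallel classes in which the first point lies in a strictly-lower-indexed block than the second must be at least $11$, and symmetrically.

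The main step, then, is to produce the explicit code (or its base blocks) and verify the distance condition. I would present the $10\times 30$ matrix $C_{11}$ explicitly, exactly as the analogous Lemmas~\ref{q3t4a12}, \ref{q3t5a12}, and \ref{q3t2a16} do, and then assert that a routine check confirms $N(\vx,\vy)=N(\vy,\vx)=11$ for every pair of distinct rows. Ideally the code would be equidistant (every pairwise Hamming distance equal to $22$) with the covering counts perfectly balanced at $11$ each, which is the hallmark of codes meeting the Plotkin-type $GBT$ bound with equality noted earlier in the paper.

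The hard part will be the construction itself — finding an arrangement whose pairwise covering numbers are all at least $11$ in both directions simultaneously. The lower-bound computation is purely arithmetic and the verification, once the array is in hand, is mechanical. The genuine difficulty lies in locating a suitable seed: because $30$ is exactly the bound, there is essentially no slack, so every pair must achieve $N(\vx,\vy)=N(\vy,\vx)=11$ exactly, forcing a highly symmetric (likely equitable or equidistant) structure. I would expect to find this by exploiting a $\bbZ_{10}$ or $\bbZ_9$-based development of a few carefully chosen base parallel classes, tuning the block patterns so that the circulant structure automatically balances the increasing and decreasing coincidences between rows, in the same spirit as the circulant construction $(A|B|C)$ used in Lemma~\ref{q3t8a7}.
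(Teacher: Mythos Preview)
Your proposal is correct and matches the paper's approach essentially exactly: the paper computes the lower bound via Theorem~\ref{lbound} and then gives three base parallel classes on $X=\bbZ_{10}$ which, developed cyclically to $30$ classes, yield a $10\times 30$ code of the circulant form $M=(A\,|\,B\,|\,C)$, precisely the structure you anticipated from Lemma~\ref{q3t8a7}. The only content you have not supplied is the explicit choice of the three base classes, which the paper lists directly.
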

\begin{proof}Let $X=\bbZ_{10}$. The following nine blocks form three
parallel classes in each row. Develop them to $30$  parallel classes
by $\bbZ_{10}$ and keep the order of blocks in each parallel class.
Then one can check this resolvable set system gives a ternary
$10$-EC-AUED code of size $10$ and length $30$, which is optimal by
Theorem~\ref{lbound}. In fact, the codewords are the rows of the $10 \times 30$ matrix $M = (A|B|C)$ where
$A$, $B$ and $C$ are circulant matrices with rows indexed by $\bbZ_{10}$. The leftmost column of $A$ has zeroes in rows $0,1,2,3$, ones in rows $4,6,8$, and twos in rows $5,7,9$; the leftmost column of $B$ has zeroes in rows $0,1,4,5$, ones in rows $2,6,9$, and twos in rows $3,7,8$; and the leftmost column of $C$ has zeroes in rows $0,2,3,7$, ones in rows $1,6,9$, and twos in rows $4,5,8$.
\[\begin{array}{lll}
    \{0,1,2,3\},&\{4,6,8\},&\{5,7,9\}\\
\{0,1,4,5\},&\{2,6,9\},&\{3,7,8\}\\
\{0,2,3,7\},&\{1,6,9\},&\{4,5,8\}
  \end{array}
\]
\end{proof}

Now we are in a position to determine $n_{3}(10,T)$ for all $T>1$.

\begin{theorem}
\label{q3a10}$n_{3}(10,T)=\left\lceil \frac{30T}{11}\right\rceil$ for all
$T>1$.
\end{theorem}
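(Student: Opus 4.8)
The plan is to determine $n_3(10,T)=\lceil 30T/11\rceil$ for all $T>1$ by following the same concatenation strategy used earlier in the paper for sizes $7$, $8$, $9$, $11$ and $12$. Since $GBT_3(10,T)=\lceil 30T/11\rceil$, Theorem~\ref{lbound} already supplies the lower bound $n_3(10,T)\geq\lceil 30T/11\rceil$, so the entire task is to construct optimal codes matching this bound. The natural engine for this is Lemma~\ref{recur}: if I can exhibit a finite set of ``seed'' values $T_0$ for which $n_3(10,T_0)=GBT_3(10,T_0)$ is established \emph{and} whose $GBT$ values add correctly, then every remaining $T$ is obtained by concatenating seeds. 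Because $GBT_3(10,T)=\lceil 30T/11\rceil$ has denominator $11$, the correct decomposition will be governed by residues modulo $11$, so I expect to need optimal codes for each residue class that cannot already be assembled from smaller pieces.

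First I would collect the optimal-code facts already proved in this section for size $10$: from Lemma~\ref{q3t2a16} we have $n_3(10,2)=6$ and $n_3(10,3)=9$; from Lemma~\ref{q3t5a12} we have $n_3(10,5)=14$; and from the new Lemma~\ref{q3t11a10} we have $n_3(10,11)=30$, which is precisely the first value where $GBT_3(10,T)$ splits off from $GBT_3(12,T)$. I would then check which residues modulo $11$ are reachable by summing these seeds in a way that is additive under $GBT$. The key arithmetic verification is that for each pair being concatenated, $GBT_3(10,T_1)+GBT_3(10,T_2)=GBT_3(10,T_1+T_2)$, i.e.\ that the two ceilings add without loss; this happens exactly when the fractional parts of $30T_1/11$ and $30T_2/11$ sum to less than $1$ or combine to give the right ceiling. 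So the plan is to verify that the seeds $T\in\{2,3,5,11\}$ (together with the trivially doubled column code for small $T$, and possibly one or two additional small verified values) generate, via repeated application of Lemma~\ref{recur}, every integer $T>1$ with the $GBT$ values adding correctly.

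The main obstacle I anticipate is a \textbf{covering argument modulo $11$}: I must confirm that every residue class $r\in\{0,1,\dots,10\}$ admits at least one optimal seed or an additive decomposition into seeds whose ceilings combine correctly. Concretely, $30\equiv 8\pmod{11}$, so $GBT_3(10,T)=\lceil 30T/11\rceil$ behaves according to $8T\bmod 11$, and the seeds $T=2,3,5,11$ give $8T\equiv 5,2,7,0\pmod{11}$. I would tabulate which target residues can be hit and verify, via Lemma~\ref{recur}, that concatenation never incurs a ceiling penalty for the chosen decompositions; the delicate cases are those $T$ where a naive split would lose one unit in the ceiling, forcing a specific choice of summands. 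If the four seeds above do not suffice to cover all residues additively, I would either invoke one more small computer-found optimal code (as was done for the $T=5$ and $T=11$ cases) or argue that the uncovered residues coincide with values where $GBT_3(10,T)=GBT_3(12,T)$, in which case Theorem~\ref{q3a12} together with the extension of Lemma~\ref{deleter} already yields the result.

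Finally, I would assemble the induction: for the base cases $T\in\{2,3,5,11\}$ (and any supplementary seeds) the equality $n_3(10,T)=GBT_3(10,T)$ is known, and for every larger $T$ I write $T=T_1+T_2$ with both $T_i$ previously settled and the $GBT$-additivity hypothesis of Lemma~\ref{recur} satisfied, whence $C_T=C_{T_1}\mid C_{T_2}$ is an optimal code of length $GBT_3(10,T)=\lceil 30T/11\rceil$. The one genuinely substantive input beyond routine bookkeeping is Lemma~\ref{q3t11a10}, whose circulant construction over $\bbZ_{10}$ supplies the missing residue $0$ seed at the smallest $T$ where size $10$ strictly improves on size $12$; everything else reduces to the modular covering check described above.
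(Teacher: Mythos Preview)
Your proposal is correct and contains all the ingredients the paper uses, but you are making the argument harder than it needs to be. The paper's organization is much simpler and avoids any modular covering analysis: because $GBT_3(10,11)=30=30\cdot 11/11$ is an \emph{integer}, one always has $\lceil 30(T-11)/11\rceil=\lceil 30T/11\rceil-30$, so the single recursion $C_T=C_{11}\mid C_{T-11}$ satisfies the additivity hypothesis of Lemma~\ref{recur} automatically for every $T\geq 13$. Hence only the base cases $T\in\{2,\dots,12\}$ need to be handled. Of these, $T=11$ is exactly Lemma~\ref{q3t11a10}, and for each $T\in\{2,\dots,10,12\}$ one checks $GBT_3(10,T)=GBT_3(12,T)$, so Theorem~\ref{q3a12} (delete two codewords from the optimal size-$12$ code) already gives $n_3(10,T)=GBT_3(10,T)$.

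Thus your fallback observation—that unreachable residues can be inherited from size $12$—is in fact the \emph{primary} mechanism for the base cases, and your residue-by-residue covering with seeds $\{2,3,5,11\}$ is unnecessary. The point you should extract is that whenever a seed $T_0$ has $GBT_q(a,T_0)$ equal to the exact (non-rounded) value $\frac{2a(a-1)T_0}{\cdots}$, that seed alone drives the full recursion and only finitely many base cases below it remain.
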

\begin{proof}For $a=10$, we have $GBT_3(10,T)=\left\lceil
\frac{30T}{11}\right\rceil$. For $T=2,\ldots,10,12$, the bound is met
since $GBT_3(10,T)=GBT_3(12,T)$ for these cases. For $T=11$, the
bound is achieved by Lemma~\ref{q3t11a10}. The optimal code $C_T$
for all $T\geq 13$ is given by recursion $C_T=C_{11}|C_{T-11}$.
\end{proof}

Finally, for completeness, when $T=1$, we note that
$n_{3}(a,1)=4=GBT_{3}(a,1)+1$ for $8\leq a \leq 19$ since by de
Bruijn et al. \cite{Bruijn:1951},
\[B(n,q)=\left\{(x_1,x_2,\ldots,x_n)\in \bbF_q^n|\sum_{i=1}^{n}x_i=\left\lceil\frac{n(q-1)}{2}\right\rceil\right\}\]
is a $0$-EC-AUED code with maximal size for given length $n$.

\section{Conclusion}\label{con}

We investigated the length of the shortest $q$-ary $t$-EC-AUED codes
of size $a$. A direct construction of optimal codes was
given via one-factorizations of complete graphs. We further provided a general construction of a $(d-1)$-EC-AUED code of length $2n$ from a code of length $n$ and minimum Hamming distance $d$. Finally, we would like to suggest the study of codes for which the words are the rows of a concatenation
of circulant matrices, similar to those constructed from resolvable packings in Section V.

\section*{Acknowledgments}

The authors thank the anonymous reviewers  and the associate editor for their constructive
comments and suggestions that greatly improved the quality of
this paper.

\vskip 10pt

\begin{IEEEbiographynophoto}{Yeow Meng Chee}
(SM'08) received the B.Math. degree in computer science
and combinatorics and optimization and the M.Math. and Ph.D. degrees in
computer science from the University of Waterloo, Waterloo, ON, Canada, in
1988, 1989, and 1996, respectively.

Currently, he is a Professor at the Division of Mathematical Sciences,
School of Physical and Mathematical Sciences, Nanyang Technological University,
Singapore. Prior to this, he was Program Director of Interactive Digital
Media R\&D in the Media Development Authority of Singapore, Postdoctoral
Fellow at the University of Waterloo and IBM¡¯s Z\"urich Research Laboratory,
General Manager of the Singapore Computer Emergency Response Team,
and Deputy Director of Strategic Programs at the Infocomm Development
Authority, Singapore.

His research interest lies in the interplay between combinatorics and computer
science/engineering, particularly combinatorial design theory, coding
theory, extremal set systems, and electronic design automation.
\end{IEEEbiographynophoto}
\begin{IEEEbiographynophoto}{Xiande Zhang}
received the Ph.D. degree in mathematics from Zhejiang University, Hangzhou, Zhejiang, P. R. China in 2009. From 2009 to 2015, she held postdoctoral positions in Nanyang Technological University and Monash University. Currently, she is a Research Professor at  school of Mathematical Sciences, University of Science and Technology of China.  Her research interests include combinatorial design theory, coding theory, cryptography, and their interactions.  She received the 2012 Kirkman Medal from the Institute of Combinatorics and its Applications.
\end{IEEEbiographynophoto}

\end{document}